\documentclass[a4paper,UKenglish,cleveref, autoref, thm-restate]{lipics-v2021}

\hideLIPIcs  

\title{Vertex Cover Interdiction in Bipartite Graphs} 

\author{Takehiro Ito}{Graduate School of Information Sciences, Tohoku University, Sendai, Japan}{takehiro@tohoku.ac.jp}{https://orcid.org/0000-0002-9912-6898}{JSPS KAKENHI Grant Numbers JP24H00686, JP24H00690}
\author{Naonori Kakimura}{Faculty of Science and Technology, Keio University, Yokohama, Japan}{kakimura@math.keio.ac.jp}{https://orcid.org/0000-0002-3918-3479}{JSPS KAKENHI Grant Numbers JP23K21646, JP26K02867, JP26K21777 and JST ERATO Grant Number JPMJER2301}
\author{Naoyuki Kamiyama}{Graduate School of Informatics, Kyoto University, Kyoto, Japan}{kamiyama@amp.i.kyoto-u.ac.jp}{https://orcid.org/0000-0002-7712-2730}{JSPS KAKENHI Grant Number JP24K14825}
\author{Yusuke Kobayashi}{Research Institute for Mathematical Sciences, Kyoto University, Kyoto, Japan}{yusuke@kurims.kyoto-u.ac.jp}{https://orcid.org/0000-0001-9478-7307}{JSPS KAKENHI Grant Numbers JP24K02901, 26K21777}
\author{Yoshio Okamoto}{Graduate School of Informatics and Engineering, The University of Electro-Communications, Chofu, Japan}{okamotoy@uec.ac.jp}{https://orcid.org/0000-0002-9826-7074}{JSPS KAKENHI Grant Numbers JP23K10982, JP26K23806 and JST ERATO Grant Number JPMJER2301}

\authorrunning{T. Ito, N. Kakimura, N. Kamiyama, Y. Kobayashi, and Y. Okamoto} 

\Copyright{Takehiro Ito, Naonori Kakimura, Naoyuki Kamiyama, Yusuke Kobayashi, and Yoshio Okamoto} 

\ccsdesc[500]{Mathematics of computing~Graph algorithms}
\ccsdesc[500]{Theory of computation~Problems, reductions and completeness}

\keywords{Vertex cover, hitting set, NP-completeness, fixed-parameter tractability} 

\nolinenumbers 

\EventEditors{Lin Chen and Nicole Megow}
\EventNoEds{2}
\EventLongTitle{37th International Symposium on Algorithms and Computation (ISAAC 2026)}
\EventShortTitle{ISAAC 2026}
\EventAcronym{ISAAC}
\EventYear{2026}
\EventDate{December 6--9, 2026}
\EventLocation{Hangzhou, China}
\EventLogo{}
\SeriesVolume{399}
\ArticleNo{8}

\usepackage{amsmath,amsfonts,latexsym,amssymb,amsthm}
\usepackage{graphicx} 
\usepackage{appendix}
\usepackage{framed}
\usepackage{xspace}
\usepackage{tcolorbox}
\usepackage{comment}
\usepackage{hyperref}
\usepackage{cleveref}

\graphicspath{{./figure/}}

\usepackage{complexity}

\newcommand{\bvci}{\textsc{Bipartite Vertex Cover Interdiction}\xspace}
\newcommand{\bminvci}{\textsc{Bipartite Minimum Vertex Cover Interdiction}\xspace}
\newcommand{\vci}{\textsc{Vertex Cover Interdiction}\xspace}
\newcommand{\minvci}{\textsc{Minimum Vertex Cover Interdiction}\xspace}

\newcommand{\minvcfamily}{\mathcal{C}^\ast}

\newcommand{\red}[1]{#1}

\begin{document}

\maketitle

\begin{abstract}
    In the vertex cover interdiction problem, we are given an undirected graph $G=(V,E)$, two integers $t$ and $k$ and a vertex subset $B\subseteq V$, and we are asked to find a set $X \subseteq B$ with $|X|\leq t$ such that $X$ hits (i.e., intersects) all the vertex covers of $G$ of size at most $k$.
    Recently, Gr\"une and Wulf proved that the problem is $\Sigma_2^p$-complete. However, their reduction relied on the fact that the vertex cover problem is \NP-complete.
    This, in turn, means that we do not know the complexity status of the vertex cover interdiction problem when the input graph is restricted to a bipartite graph since the vertex cover problem can be solved in polynomial time for bipartite graphs.
    One of our main results shows that the vertex cover interdiction problem is \NP-complete for bipartite graphs. In contrast, when $k$ is restricted to the minimum vertex cover size, i.e., we are only required to hit all the minimum vertex covers, we show that the vertex cover interdiction problem can be solved in polynomial time for bipartite graphs. 
    This motivates us to study the parameterized complexity of the vertex cover interdiction problem for bipartite graphs when the difference of $k$ and the minimum vertex cover size is taken as a parameter. With this parameter, we show that the problem is $\W[1]$-hard, but can be solved in polynomial time when the parameter is constant (i.e., in \XP~time).
    We also show that the problem is fixed-parameter tractable when parameterized by $k$. 
\end{abstract}


\section{Introduction}

In the vertex cover interdiction problem, we are given an undirected graph $G=(V,E)$, two integers $t$ and $k$, and a vertex subset $B\subseteq V$. We are asked to find a set $X \subseteq B$ with $|X|\leq t$ such that $X$ hits all the vertex covers of $G$ of size at most $k$;  
see \Cref{fig:example} for an example. 
This problem is a mathematical model of the following situation.
We are operating a network that is represented by a graph $G$ and want to monitor the links in $G$ by placing a certain number of sensors.
The placement of a sensor is constrained to a vertex of $G$, and the aggregated placement of sensors corresponds to a vertex cover of $G$ since we want to monitor all the links (i.e., edges).
 Due to severe budget restrictions, the number of sensors is limited to $k$.
Now, we want to evaluate the network $G$ based on its tolerance against attacks by intruders.
Intruders may destroy a sensor, but they have no knowledge about the placement of sensors.
Therefore, they want to find a small subset $X$ (i.e., a subset of size at most $t$) of the vertices such that by observing only those vertices, they are certain about the placement of at least one sensor.
The observation by intruders is restricted to some subset $B$ of the vertices since the network $G$ is not fully exposed to them.
When such a set $X$ exists, the size bound $t$ is considered a measure of the tolerance of the network $G$.

\begin{figure}[tbp]
    \centering
    \includegraphics[width=0.75\linewidth]{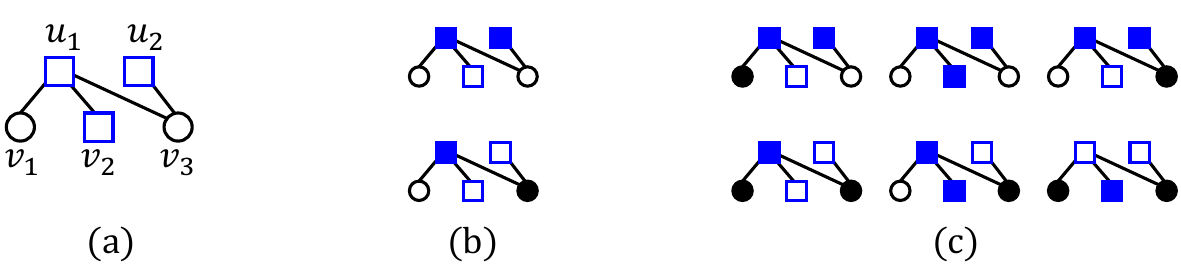} 
    \caption{(a) Graph $G$ with $B= \{u_1, u_2, v_2\}$, (b) all vertex covers of size two in $G$, and (c) all vertex covers of size three in $G$;
    in (b) and (c), the vertices included in each vertex cover are colored solid. 
    If $t=1$ and $k=2$, then $\{u_1\}$ is the unique solution because $u_1$ is the only vertex in $B$ that hits (intersects) all vertex covers of size at most two. 
    On the other hand, if $t=1$ and $k=3$, then there is no feasible solution because no single vertex in $B$ hits all vertex covers of size at most three.}
    \label{fig:example}
\end{figure}

Gr\"une and Wulf \cite{DBLP:conf/ipco/GruneW25} proved that the vertex cover interdiction problem is $\Sigma_2^p$-complete. 
The complexity class $\Sigma_2^p$ contains the well-known complexity class \NP \ (i.e., $\NP \subseteq \Sigma_2^p$), but it is an open problem whether proper inclusion holds or not.
A typical $\Sigma_2^p$-complete problem is the $\exists\forall$-SAT, in which we are given a CNF formula $\varphi(X,Y)$, where $X$ and $Y$ are sets of variables, and we want to determine whether there exists an assignment to $X$ such that for all assignments to $Y$, the expression $\varphi(X,Y)$ is true.
Indeed, Gr\"une and Wulf  \cite{DBLP:conf/ipco/GruneW25} provided a general framework to prove that if the base problem $\Pi$ is \NP-complete and it has a particular property that is common to many problems, then the interdiction problem derived from $\Pi$ is $\Sigma_2^p$-complete.
For the vertex cover interdiction problem, the base problem is the vertex cover problem, in which we are given an undirected graph $G=(V,E)$ and a non-negative integer $k$, and we want to determine whether $G$ contains a vertex cover of size at most $k$.
The vertex cover problem is one of the oldest problems that have been known to be \NP-complete~\cite{DBLP:conf/coco/Karp72}.

This, in turn, means that we do not know the complexity status of the vertex cover interdiction problem when the input graph is restricted to a bipartite graph since the vertex cover problem can be solved in polynomial time for bipartite graphs. Our research question boils down to the following sentence:
    Determine the computational complexity of the vertex cover interdiction problem when the input graph is restricted to a bipartite graph.

One of our main results shows that the vertex cover interdiction problem is \NP-complete for bipartite graphs. In contrast to the \NP-completeness above, when $k$ is restricted to the minimum vertex cover size, i.e., we are only required to hit all the minimum vertex covers, the vertex cover interdiction problem can be solved in polynomial time for bipartite graphs. 
This motivates us to study the parameterized complexity of the vertex cover interdiction problem for bipartite graphs when the difference of $k$ and the minimum vertex cover size is taken as a parameter. With this parameter, we show that the problem is $\W[1]$-hard, but can be solved in polynomial time when the parameter is constant (i.e., in \XP~time).
In contrast, we also show that \bvci is fixed-parameter tractable when parameterized by $k$.

\paragraph*{Related Results}
Recently, the interdiction problems on graphs and networks have been actively studied.
Surveys by Smith and Song~\cite{DBLP:journals/eor/SmithS20} and by Ausiello et al.~\cite{DBLP:journals/csr/AusielloBFLR26} gave general overviews and include pointers to existing work on the network interdiction problems.
Some theory papers are devoted to approximation algorithms (e.g.~\cite{DBLP:journals/mor/ChestnutZ17,DBLP:conf/icalp/ChenWZ22}) and fixed-parameter tractability (e.g.~\cite{DBLP:journals/networks/BazganFNNS19,DBLP:journals/algorithmica/DvorakK18}).

The work by Bazgan, Toubaline, and Tuza~\cite{DBLP:journals/dam/BazganTT11} is the most closely related to our work.
Their paper studied, among other results, the following problem related to vertex covers, which we phrase as a decision problem.
We are given an undirected graph $G=(V,E)$ with non-negative integer vertex weight and integers $t, u$. Then, we want to find a set $X$ of at most $t$ vertices such that the minimum vertex cover weight of $G-X$ is at most $u$.
They showed that the problem is \NP-hard even for bipartite graphs, but polynomial-time solvable for unweighted bipartite graphs.
We emphasize that their problem is different from ours in this paper since our problem has a restriction $B \subseteq V$ on the selection of our hitting sets.
Similar studies have been done by Costa, de Werra, and Picouleau~\cite{DBLP:journals/jco/CostaWP11} and Bentz, Costa, Picouleau, Ries, and de Werra~\cite{DBLP:journals/jda/BentzCPRW12} who considered hitting every minimum vertex cover at $d$ vertices.

\section{Preliminaries}

The set of all non-negative integers is denoted by $\mathbb{Z}_+$.
For two sets $X$ and $Y$, their symmetric difference $(X \setminus Y) \cup (Y \setminus X)$ is denoted by $X \mathbin{\triangle} Y$. 

All graphs in this paper are finite, undirected, and simple.
A graph is denoted as a pair $G=(V,E)$ of its vertex set $V$ and its edge set $E$.
When we talk about a \emph{bipartite} graph, the vertex set $V$ is assumed to be partitioned into two sets $V_1$ and $V_2$ such that every edge $\{v_1,v_2\}$ satisfies $v_1 \in V_1$ and $v_2 \in V_2$; namely, there is no edge between two vertices of $V_1$ or two vertices of $V_2$.

Let $G=(V,E)$ be a graph.
For a vertex subset $X \subseteq V$, we denote by $N_G(X)$ the (open) neighborhood of $X$; i.e., $N_G(X) = \{v \in V \mid \{u,v\} \in E \text{ for some } u \in X\} \setminus X$.
In addition, we denote by $N_G[X]$ the closed neighborhood of $X$; i.e., $N_G[X] = N_G(X) \cup X$.
For a vertex $v \in V$, we denote by $\delta_G(v)$ the set of edges incident to $v$; i.e., $\delta_G(v) = \{e \in E \mid v \in e\}$.

A \emph{vertex cover} of a graph $G=(V,E)$ is a vertex subset $C\subseteq V$ such that $C\cap \{u,v\} \neq \emptyset$ for all edges $\{u,v\} \in E$.
A \emph{minimum vertex cover} of $G$ is a vertex cover of minimum size.
We denote the family of vertex covers of $G$ of size at most $k$ by $\mathcal{C}(G, k)$.  

A \emph{matching} of a graph $G=(V,E)$ is an edge subset $M \subseteq E$ such that $e \cap e' =\emptyset$ for any two edges $e, e' \in M$.
A \emph{maximum matching} of $G$ is a matching of maximum size.
It is known that in every bipartite graph $G$, the size of a maximum matching is equal to the size of a minimum vertex cover, which we denote by $\mu(G)$.

A set $X$ \emph{hits} a set $A$ if $X\cap A \neq \emptyset$.
A set $X$ \emph{hits} a set family $\mathcal{A}$ if $X\cap A \neq \emptyset$ for all $A \in \mathcal{A}$.
In this case, we often say that $X$ is a \emph{hitting set} (or a \emph{transversal}) of $\mathcal{A}$.

A \emph{dominating set} of a graph $G=(V,E)$ is a vertex subset $D \subseteq V$ such that every vertex $v \in V \setminus D$ is adjacent to at least one vertex in $D$.
Equivalently, $D$ is a hitting set for the family of the closed neighborhoods $\{N_G[\{v\}] \mid v \in V \}$.

The main focus of this paper is the following problem. 
\begin{tcolorbox}[colback=white,sharp corners]
\begin{description}
\item[Problem:] \bvci
\item[Input:] A bipartite graph $G=(V, E)$, two non-negative integers $t,k$, and a vertex subset $B\subseteq V$
\item[Question:] Does there exist a set $X \subseteq B$ such that $|X|\leq t$ and $X\cap C\neq \emptyset$ for every vertex cover $C$ of $G$ of size at most $k$?
\end{description}
\end{tcolorbox}
In other words, a solution $X$ to \bvci is a hitting set for the family $\mathcal{C}(G,k)$ of vertex covers of $G$ of size at most $k$.

When $k$ equals the size of a minimum vertex cover of $G$,  
the problem is referred to as \bminvci. 
Note that, as we will see in Appendix~\ref{sec:sigma2pmin}, 
this problem becomes $\Sigma_2^p$-complete when $G$ can be a non-bipartite graph. 

\begin{tcolorbox}[colback=white,sharp corners]
\begin{description}
\item[Problem:] \bminvci
\item[Input:] A bipartite graph $G=(V, E)$, a non-negative integer $t$, and a vertex subset $B\subseteq V$
\item[Question:] Does there exist a set $X \subseteq B$ such that $|X|\leq t$ and $X\cap C\neq \emptyset$ for every minimum vertex cover $C$ of $G$?
\end{description}
\end{tcolorbox}

\section{NP-Completeness of \bvci}

In this section, we prove that \bvci\ is \NP-complete, 
whereas the problem in general graphs is $\Sigma_2^p$-complete~\cite{DBLP:conf/ipco/GruneW25}.

\begin{proposition}\label{prop:np}
    \bvci is in the class \NP. 
\end{proposition}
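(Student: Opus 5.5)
The certificate will be the set $X \subseteq B$ itself, with $|X| \le t$. Verification then reduces to checking, in polynomial time, whether $X$ hits every vertex cover of $G$ of size at most $k$. Equivalently, we check that no vertex cover $C$ with $|C|\le k$ satisfies $C \cap X = \emptyset$.

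The key step is to reformulate avoiding $X$ as a constrained vertex cover problem on a smaller bipartite graph. A vertex cover $C$ with $C\cap X=\emptyset$ must contain every neighbor of every vertex of $X$, since each edge $\{x,v\}$ with $x\in X$ must be covered by $v$. If some edge has both endpoints in $X$, no such $C$ exists and $X$ is trivially a hitting set. Otherwise, set $N = N_G(X)$; the vertex covers avoiding $X$ are exactly the sets $N \cup C'$ where $C'$ is a vertex cover of $G' = G - (X\cup N)$.

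Hence a cover avoiding $X$ of size at most $k$ exists if and only if $|N| + \mu(G') \le k$. Since $G'$ is an induced subgraph of a bipartite graph, it is bipartite, so $\mu(G')$ can be computed in polynomial time as a maximum matching size by K\H{o}nig's theorem, for instance with Hopcroft--Karp. The verifier accepts exactly when $|X|\le t$, $X\subseteq B$, and $|N_G(X)| + \mu(G - N_G[X]) > k$.

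The only point needing care is the correctness of the equivalence in the previous paragraph, in particular that minimizing $|C'|$ is independent of $N$. This holds because $G'$ excludes both $X$ and $N$, so the two parts of the cover are disjoint and their sizes add. No real obstacle remains.
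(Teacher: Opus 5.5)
Your proof is correct, but it enforces the constraint $C \cap X = \emptyset$ differently from the paper.

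\textbf{The paper's route.} It keeps the whole graph and assigns weight $k+1$ to the vertices of $X$ and weight $1$ to all others. Then $X$ is a hitting set exactly when the minimum-weight vertex cover of $G$ has weight greater than $k$. This relies on a polynomial algorithm for \emph{weighted} bipartite vertex cover.

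\textbf{Your route.} You force $N_G(X)$ into the cover and delete $N_G[X]$. This gives the explicit value $|N_G(X)| + \mu(G - N_G[X])$ for the smallest cover avoiding $X$, and it needs only an unweighted maximum matching. It is the same mechanism used in \Cref{claim:DCVC}.

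\textbf{Comparison.} The weighting trick buys uniformity: it needs no case distinction, and it is exactly the program that is dualized into the $b$-matching certificate of \Cref{thm:dualformulate}. Your version is more elementary and says more about the structure of covers that avoid $X$.

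\textbf{One small inconsistency.} Your final acceptance rule drops the case you handled earlier. If $G[X]$ contains an edge, then, because $N_G(X)$ excludes $X$, the inequality $|N_G(X)| + \mu(G - N_G[X]) > k$ can fail even though $X$ is a hitting set. For example, take $X$ to be the two endpoints of an isolated edge. So the verifier must also accept whenever $X$ is not an independent set. The paper's weighting handles this case automatically.
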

\begin{proof}
    Let $(G, t, k, B)$ be an instance of \bvci, where $G = (V,E)$.
    It suffices to show that we can determine in polynomial time whether a given vertex set $X \subseteq V$ is a solution to the instance.
    First, we can trivially check whether $X \subseteq B$ and $|X| \leq t$ in polynomial time.
    Next, to check if $X \cap C \neq \emptyset$ for every vertex cover $C$ of size at most $k$, we assign a weight of $1$ to each vertex in $V \setminus X$, and a sufficiently large weight, say $k+1$, to each vertex in $X$.
    Under this weight assignment, any vertex cover containing at least one vertex from $X$ has a weight strictly greater than $k$. 
    Then, $X$ satisfies the intersection condition if and only if the minimum weight of any vertex cover of $G$ is strictly greater than $k$. 
    Since we can find a minimum weight vertex cover of a bipartite graph in polynomial time (see, e.g., \cite[Chapter 17]{Sch03}), the intersection condition can be checked in polynomial time.
\end{proof}

To prove the \NP-hardness of \bvci, we construct a reduction from the \textsc{Dominating Set} problem, where we are given a graph $G'$ and a positive integer $t'$, and are asked to determine whether $G'$ has a dominating set of size at most $t'$. 
\textsc{Dominating Set} is known to be \NP-complete~\cite{DBLP:conf/coco/Karp72}.
Furthermore, it is $\W[2]$-complete when parameterized by $t'$~\cite{DF99}, and is hard to approximate~\cite{DinurS14}.
Our reduction indeed yields the fixed-parameter intractability and inapproximability. 
\begin{theorem} \label{thm:NPhard}
    Let $(G',t')$ be an instance of \textsc{Dominating Set}, where $G'=(V', E')$.
    Then, there exists an instance $(G, t, k, B)$ of \bvci, where $G = (V,E)$, such that 
        \begin{list}{*}{%
                 \settowidth{\labelwidth}{(a)}%
                   \setlength{\parsep}{0pt}%
                   \setlength{\parskip}{0pt}}
            \item[\textup{(}a\textup{)}] $|V| = 3|V'|$, and $|E|= 2|V'|+2|E'|$\textup{;}
            \item[\textup{(}b\textup{)}] $t = t'$, $k = 2|V'|-1$, and $|B| = |V'|$\textup{;} and 
            \item[\textup{(}c\textup{)}] $G'$ has a dominating set of size at most $t'$ if and only if there exists a solution $X$ to $(G, t, k, B)$.
        \end{list} 
    Furthermore, $(G, t, k, B)$ can be constructed in polynomial time.
\end{theorem}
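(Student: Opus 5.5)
The plan is to rephrase the hitting condition in terms of independent sets, and then build $G$ so that one ``extra'' vertex can join a large independent set exactly when some vertex of $G'$ is left undominated. Put $n=|V'|$ and $m=|E'|$. A set $C$ is a vertex cover avoiding $X$ if and only if $I=V\setminus C$ is an independent set with $X\subseteq I$. With $|V|=3n$ and $k=2n-1$, the condition ``$X$ hits every vertex cover of size at most $k$'' therefore becomes: every independent set $I\supseteq X$ satisfies $|I|\le n$. The goal is to make this equivalent to $X$ corresponding to a dominating set of $G'$.

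\emph{Construction.} For each $v\in V'$ we create three vertices $a_v,b_v,c_v$. We add the edges $\{a_v,b_v\}$ for all $v\in V'$, and the edges $\{c_u,b_v\}$ for all $u\in V'$ and $v\in N_{G'}[\{u\}]$. The second family consists of $n$ edges $\{c_v,b_v\}$ plus two edges $\{c_u,b_v\},\{c_v,b_u\}$ for each $\{u,v\}\in E'$. This gives $|V|=3n$ and $|E|=n+(n+2m)=2n+2m$, as required in (a). The graph is bipartite with parts $\{b_v\}$ and $\{a_v\}\cup\{c_v\}$. We set $B=\{b_v\mid v\in V'\}$, $t=t'$ and $k=2n-1$, which gives (b). The construction is clearly polynomial. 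We identify a set $X\subseteq B$ with $D=\{v\mid b_v\in X\}\subseteq V'$.

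\emph{Case $D$ is dominating.} Let $I\supseteq X$ be independent. For every $u$ there is some $v\in N_{G'}[\{u\}]\cap D$, so $b_v\in I$ is adjacent to $c_u$. Hence $I$ contains no $c$-vertex. The remaining vertices split into the $n$ edges $\{a_v,b_v\}$, and $I$ contains at most one endpoint of each. So $|I|\le n$, meaning every vertex cover of size at most $k$ meets $X$. Since $|X|=|D|\le t'$, the set $X$ is a solution.

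\emph{Case $D$ is not dominating.} Conversely, suppose $X\subseteq B$ is any set with $|X|\le t$, and assume the corresponding $D$ is not dominating. Choose $u$ with $N_{G'}[\{u\}]\cap D=\emptyset$ and set
\[
I=X\cup\{a_v\mid v\notin D\}\cup\{c_u\}.
\]
This $I$ is independent, for three reasons. First, no $a_v$ is paired with its only neighbor $b_v$. Second, the neighbors of $c_u$ are $b_w$ with $w\in N_{G'}[\{u\}]$, and none of these lies in $X$. Third, there are no edges among the $b$'s, among the $a$'s, or between $a$'s and $c$'s. Since $|I|=n+1$, the complement $V\setminus I$ is a vertex cover of size $2n-1=k$ disjoint from $X$. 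So $X$ is not a solution, and a solution can only come from a dominating set of size at most $t'$. This proves (c).

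I expect the only delicate point to be designing the gadget, in particular choosing which side carries $B$ and which vertex serves as the ``bonus'' vertex. Once the independent-set reformulation is in place, both directions of the verification are routine.
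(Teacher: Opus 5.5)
Your proposal is correct and is essentially the paper's proof. Your $b_v$, $a_v$, $c_u$ are exactly the paper's $v_i$, $\bar v_i$, $v_i'$, with the same $B$, $t$ and $k$. Your independent-set verification is the complementary view of the paper's key claim, that a vertex cover of size at most $k$ avoiding $X$ exists if and only if $|N_G(X)\cap V'|\le |V'|-1$, i.e., if and only if $X$ leaves some vertex of $G'$ undominated.
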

\begin{proof}
    For a given instance $(G' = (V',E'),t')$ of \textsc{Dominating Set}, we construct such an instance $(G=(V,E), t, k, B)$ of \bvci as follows; 
    see \Cref{fig:reduction} for an example.
    For each vertex $v_i' \in V'$, we create a copy $v_i'$ for $G$.
    We also introduce two new vertices $v_i$ and $\bar{v}_i$ to $G$, and join them by an edge $\{v_i, \bar{v}_i\}$. 
    We then add an edge $\{v_i, v_j'\}$ for every $v_j' \in N_{G'}[\{v_i'\}]$.
    Let $G$ be the resulting graph, where $V = \{ v_i',  v_i, \bar{v}_i \mid v_i' \in V' \}$, and let $B=\{ v_i \mid v_i' \in V' \}$.
    Note that $G$ is a bipartite graph with bipartition $(B, V\setminus B)$. 
    Finally, we set $t = t'$ and $k = |B|+|V'|-1$.
    In this way, $(G, t, k, B)$ can be constructed in polynomial time, and conditions~(a) and~(b) are satisfied.  

\begin{figure}[tbp]
    \centering
    \includegraphics[width=0.75\linewidth]{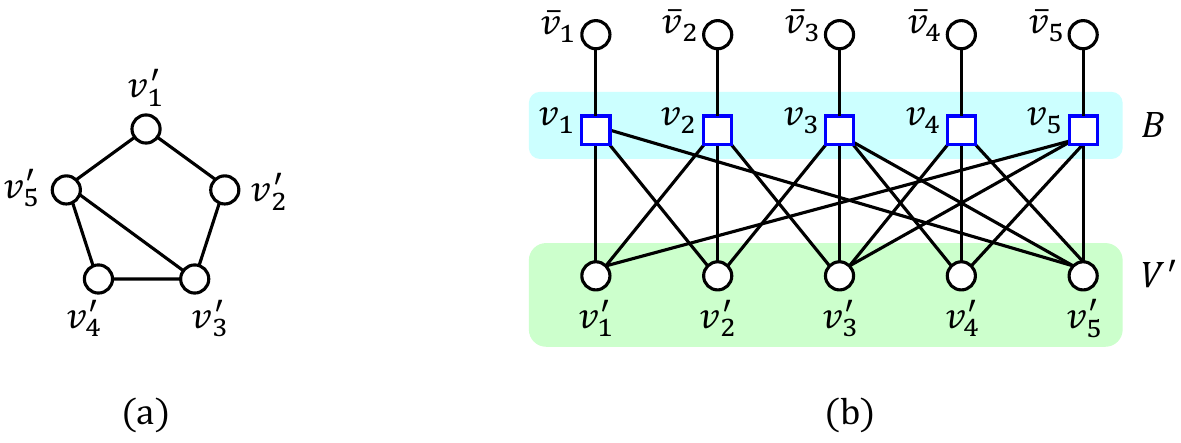} 
    \caption{(a) A graph $G'=(V',E')$ of an instance of \textsc{Dominating Set}, and (b) the corresponding graph $G=(V,E)$ of an instance of \bvci.}
    \label{fig:reduction}
\end{figure}

    It remains to verify condition~(c).
    As we will formalize later, for a solution $X \subseteq B$ to the instance $(G, t, k, B)$, 
    the corresponding set $D' = \{ v'_i \in V' \mid v_i \in X \}$ of $G'$ is a dominating set,
    and vice versa.
    To prove condition~(c), we first establish a necessary and sufficient condition for a vertex cover to be disjoint from $X$ in the following claim. 
    \begin{claim}\label{claim:DCVC}
        For any $X \subseteq B$, it holds that $|N_G(X) \cap V'| \le |V'|-1$ if and only if there exists a vertex cover $C$ of $G$ of size at most $k$ such that $X \cap C = \emptyset$. 
    \end{claim}
    \begin{proof}[Proof of \Cref{claim:DCVC}]
        We first prove the necessity.
        Assume that $|N_G(X) \cap V'| \le |V'|-1$ holds for a subset $X \subseteq B$. 
        Let $C = (B \setminus X) \cup N_G(X)$. 
        Then, we have $X \cap C = \emptyset$.
        Furthermore, since $N_G(X) = \{\bar{v}_i \mid v_i \in X \} \cup (N_G(X) \cap V')$, we have 
        \[
            |C| = (|B|-|X|) + (|X| + |N_G(X) \cap V'|) \le |B| + |V'| - 1 = k.
        \]
        It remains to show that $C$ forms a vertex cover of $G$. 
        Since $B \setminus X \subseteq C$ and $G$ is a bipartite graph with bipartition $(B, V\setminus B)$, it suffices to show that $C \cap e \neq \emptyset$ for every edge $e$ incident to a vertex in $X$.
        This condition is guaranteed by the fact that $N_G(X) \subseteq C$. 

        We then prove the sufficiency.
        Assume that $G$ has a vertex cover $C$ of size at most $k$ such that $X \cap C = \emptyset$.
        Since $C$ is a vertex cover of $G$ and $X \cap C = \emptyset$, we have $N_G(X) \subseteq C$ to cover the edges incident to $X$.
        Recall that $N_G(X)$ is the disjoint union of $\{\bar{v}_i \mid v_i \in X \}$ and $N_G(X) \cap V'$. 
        Furthermore, any vertex cover of $G$ must cover the edge $\{v_j, \bar{v}_j\}$ for each $v_j \in B \setminus X$, and hence $C$ must contain at least one of $v_j$ and $\bar{v}_j$. 
        Since the set $\{\bar{v}_i \mid v_i \in X\}$, the set $V'$, and each set $\{v_j, \bar{v}_j\}$ for $v_j \in B \setminus X$ are pairwise disjoint, we can obtain the following lower bound on the size of $C$:
        \begin{align*}
            |C| &\ge |\{\bar{v}_i \mid v_i \in X\}| + |N_G(X) \cap V'| + \sum_{v_j \in B \setminus X} \left| \{v_j, \bar{v}_j\} \cap C \right| \\
                &\ge |X| + |N_G(X) \cap V'| + |B \setminus X| 
                = |B| + |N_G(X) \cap V'|.
        \end{align*}
        Since we assumed $|C| \le k = |B| + |V'| - 1$, we obtain $|N_G(X) \cap V'| \le |V'| - 1$, as claimed.
    \end{proof}

    We are now ready to verify condition~(c).

    We first prove the necessity.
    Assume that $G'$ has a dominating set $D'$ of size at most $t'$. 
    Let $X = \{ v_i \in B \mid v'_i \in D' \}$.
    Then, we have $|X| = |D'| \le t' = t$.
    Since $D'$ is a dominating set of $G'$, the construction of $G$ implies that $V' \subseteq N_G(X)$.
    Therefore, we have $N_G(X) \cap V' = V'$ and hence $|N_G(X) \cap V'| > |V'| - 1$.
    By the contrapositive of \Cref{claim:DCVC}, there is no vertex cover of size at most $k$ that is disjoint from $X$.
    Therefore, $X$ is a solution to the instance $(G, t, k, B)$.

    We then prove the sufficiency.
    Assume that there exists a solution $X \subseteq B$ to the instance $(G, t, k, B)$.
    Then, $|X| \le t$, and $X \cap C \neq \emptyset$ for every vertex cover $C$ of size at most $k$.
    By \Cref{claim:DCVC}, the latter condition implies that $|N_G(X) \cap V'| > |V'|-1$, and hence we have $N_G(X) \cap V' = V'$.
    For the graph $G'$, let $D' = \{ v'_i \in V' \mid v_i \in X \}$.
    Then, we have $|D'| = |X| \le t = t'$.
    Since $N_G(X) \cap V' = V'$, every vertex $v'_j \in V'$ is adjacent to at least one vertex $v_i \in X$ in $G$.
    By the construction of $G$, the edge $\{v_i, v'_j\}$ in $G$ exists only when $v'_j \in N_{G'}[\{v'_i\}]$, and hence every vertex $v'_j \in V'$ is dominated by some vertex $v'_i \in D'$ in $G'$.
    Therefore, $D'$ is a dominating set of $G'$ of size at most $t'$.
\end{proof}

    Since \textsc{Dominating Set} is \NP-complete~\cite{DBLP:conf/coco/Karp72}, \Cref{prop:np} and \Cref{thm:NPhard} yield the following corollary. 
\begin{corollary}
    \bvci is \NP-complete.
\end{corollary}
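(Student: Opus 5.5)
The corollary follows by putting together the two results just established. Membership is exactly \Cref{prop:np}. A candidate $X$ can be checked in polynomial time: we test $X\subseteq B$ and $|X|\le t$, and then test whether every vertex cover of size at most $k$ meets $X$. For the latter we give the vertices of $X$ weight $k+1$ and all other vertices weight $1$, and compute a minimum-weight vertex cover of the bipartite graph $G$. Some vertex cover of size at most $k$ avoids $X$ if and only if this minimum weight is at most $k$.

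For \NP-hardness I will invoke \Cref{thm:NPhard} as a polynomial-time many-one reduction from \textsc{Dominating Set}, which is \NP-complete by Karp's result. Given an instance $(G',t')$, the theorem produces in polynomial time an instance $(G,t,k,B)$ of \bvci. The graph $G$ is bipartite with bipartition $(B,V\setminus B)$, and $G$ has a number of vertices and edges linear in the size of $G'$ by condition~(a). Condition~(c) states that $(G',t')$ is a yes-instance exactly when $(G,t,k,B)$ is a yes-instance, so the map is a valid Karp reduction.

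Combining membership and hardness gives \NP-completeness. There is no real obstacle here. The only point to confirm is that the constructed graph is genuinely bipartite, so that the reduction stays inside the restricted problem. This holds because every edge of $G$, namely each $\{v_i,\bar v_i\}$ and each $\{v_i,v'_j\}$, joins a vertex of $B$ to a vertex of $V\setminus B$.
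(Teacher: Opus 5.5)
Your proposal is correct and follows the paper's route: membership via \Cref{prop:np} (the weighted minimum vertex cover check) and hardness via the reduction of \Cref{thm:NPhard} from the \NP-complete \textsc{Dominating Set}. Your explicit check that every edge joins $B$ to $V\setminus B$ matches the paper's remark that $G$ is bipartite with bipartition $(B, V\setminus B)$.
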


    \textsc{Dominating Set} is known to be $\W[2]$-complete when parameterized by $t'$~\cite{DF99}. 
    Since \Cref{thm:NPhard}(b) states that $t=t'$, we have the following corollary. 
\begin{corollary}
    \bvci is $\W[2]$-hard when parameterized by the solution size $t$.
\end{corollary}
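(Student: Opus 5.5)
The plan is to transfer the parameterized hardness of \textsc{Dominating Set} through the reduction of \Cref{thm:NPhard}, checking that this reduction is a parameterized reduction for the parameter $t$. Take an instance $(G',t')$ of \textsc{Dominating Set} parameterized by $t'$, which is $\W[2]$-complete~\cite{DF99}. Apply \Cref{thm:NPhard} to obtain the instance $(G,t,k,B)$ of \bvci.

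A parameterized (FPT) reduction needs three properties, and each is already supplied by the theorem:
\begin{itemize}
\item[(i)] the construction runs in polynomial time, which is in particular FPT time;
\item[(ii)] the instances are equivalent, which is condition~(c);
\item[(iii)] the new parameter is bounded by a function of the old one; here condition~(b) gives exactly $t=t'$.
\end{itemize}

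Given these, an FPT algorithm for \bvci parameterized by $t$ would yield one for \textsc{Dominating Set} parameterized by $t'$. Therefore \bvci is $\W[2]$-hard.

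There is no real obstacle. The only point needing care is that the equivalence in (c) is stated with the bound $t=t'$ unchanged, so the parameter is preserved exactly and no blow-up occurs. The corollary follows directly.
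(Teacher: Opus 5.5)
Your proposal is correct and matches the paper's argument: the reduction of \Cref{thm:NPhard} runs in polynomial time, is answer-preserving by condition~(c), and leaves the parameter unchanged ($t=t'$) by condition~(b). It is therefore a parameterized reduction from \textsc{Dominating Set}, which is $\W[2]$-complete when parameterized by $t'$.
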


    Furthermore, our reduction exactly preserves the optimum solution size (\Cref{thm:NPhard}(b)), and increases the number of vertices by only a constant factor (\Cref{thm:NPhard}(a)).
    Therefore, the inapproximability of \textsc{Dominating Set}~\cite{DinurS14} yields the following corollary.
\begin{corollary}
    Unless $\P = \NP$, there is no polynomial-time approximation algorithm within a factor of $(1 - o(1)) \ln |V|$
    for finding a minimum-size solution $X$ for \bvci.
\end{corollary}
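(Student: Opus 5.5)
The plan is to derive the corollary by combining \Cref{thm:NPhard} with the inapproximability result of Dinur and Steurer~\cite{DinurS14}. That result says that, unless $\P=\NP$, \textsc{Dominating Set} (equivalently, \textsc{Set Cover} restricted to closed neighborhoods) admits no polynomial-time $(1-\varepsilon)\ln n$-approximation for any fixed $\varepsilon>0$, where $n=|V'|$. We argue by contradiction: suppose a polynomial-time algorithm $\mathcal{A}$ returns a solution $X$ for \bvci whose size is within a factor $(1-o(1))\ln|V|$ of the minimum. We then use it to approximate a minimum dominating set.

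The reduction runs as follows. Given $G'$, build $(G,t,k,B)$ exactly as in the proof of \Cref{thm:NPhard}. The value $t$ plays no role in the optimization version. The claim established inside that proof, \Cref{claim:DCVC}, gives a size-preserving bijection between feasible solutions. A set $X\subseteq B$ hits every vertex cover of size at most $k$ if and only if $N_G(X)\supseteq V'$, and this holds if and only if $D'=\{v_i'\mid v_i\in X\}$ is a dominating set of $G'$. Hence the minimum solution size of the \bvci instance equals $\gamma(G')$, and any feasible $X$ converts in polynomial time to a dominating set of the same size. Running $\mathcal{A}$ therefore yields a dominating set of size at most $(1-o(1))\ln|V|\cdot\gamma(G')$.

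The remaining step, which I expect to be the only delicate point, is translating the ratio from $|V|$ to $|V'|$. By \Cref{thm:NPhard}(a), $|V|=3|V'|$, so $\ln|V|=\ln|V'|+\ln 3=(1+o(1))\ln|V'|$. Consequently $(1-o(1))\ln|V|=(1-o(1))\ln|V'|$, where the $o(1)$ terms are absorbed as $|V'|\to\infty$. More concretely, if $\mathcal{A}$ had ratio $(1-\varepsilon)\ln|V|$ for some fixed $\varepsilon>0$, then for all sufficiently large $|V'|$ we would get ratio $(1-\varepsilon/2)\ln|V'|$ for \textsc{Dominating Set}. This contradicts~\cite{DinurS14} unless $\P=\NP$. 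Small instances are solved by brute force in constant time.
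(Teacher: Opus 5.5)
Your proposal is correct and follows the same route as the paper. The paper simply notes that the reduction of \Cref{thm:NPhard} preserves the optimum solution size exactly and that $|V| = 3|V'|$, then invokes the inapproximability of \textsc{Dominating Set}~\cite{DinurS14}; you additionally make explicit that every feasible $X$ maps back, via \Cref{claim:DCVC}, to a dominating set of the same size, and that the additive $\ln 3$ in $\ln|V| = \ln|V'| + \ln 3$ is absorbed into the $o(1)$ term for large $|V'|$.
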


\section{Bipartite Near-Minimum Vertex Cover Interdiction}

In contrast to the \NP-completeness of \bvci, we show that \bminvci can be solved in polynomial time. 
More generally, we show that \bvci can be solved in polynomial time 
when the difference between $k$ and the minimum vertex cover size $\mu(G)$ is bounded by a fixed constant. 

In Section~\ref{sec:dualformulation}, 
we present an alternative formulation of \bvci based on the duality between vertex covers and matchings in bipartite graphs. 
Using this formulation, in Section~\ref{sec:XPalgo}, we give an \XP-time algorithm for \bvci parameterized by 
$k - \mu(G)$, which implies the polynomial-time solvability of \bminvci. 
A faster algorithm for \bminvci is presented in Section~\ref{sec:almostlinear}. 
As a complementary result, in Section~\ref{sec:W1hard},  
we show that \bvci is $\W[1]$-hard when parameterized by $k - \mu(G)$, 
implying that no fixed-parameter tractable algorithm exists under standard complexity assumptions.

\subsection{Dual Formulation}
\label{sec:dualformulation}

Let $G=(V, E)$ be a graph. 
A multiset of $E$ is a collection of edges from $E$ in which edges are allowed to appear multiple times, 
and it can be represented by multiplicities $\alpha_e \in \mathbb{Z}_+$ for each $e \in E$. 
For a multiset $M$ of $E$ and $F \subseteq E$, we denote by $M \cap F$ the multiset 
in which an edge $e \in E$ has the same multiplicity as in $M$ if $e \in F$, and $0$ otherwise. 
For a multiset $M$ of $E$ and $v \in V$, we denote ${\rm deg}_{M}(v) = |M \cap \delta_G(v)|$. 
For $b \in \mathbb{Z}_{+}^V$, a multiset $M$ of $E$ is called a \emph{$b$-matching} if 
${\rm deg}_{M}(v) \le b_v$ for any $v \in V$, where $b_v$ is the entry of $b$ that corresponds to a vertex $v \in V$.

We show the following theorem, which characterizes solutions to \bvci. 

\begin{theorem}
    \label{thm:dualformulate}
    Let $G=(V,E)$ be a bipartite graph and $X\subseteq V$.
    Then, $X$ is a hitting set of the family $\mathcal{C}(G,k)$ of vertex covers of $G$ of size at most $k$ if and only if 
    there exists a multiset\footnote{%
    We can replace ``multiset'' with ``set'' when $X$ contains no pair of adjacent vertices. Indeed, in this case the degree constraint on $V \setminus X$ implies that the multiplicity of each edge is at most one.}
    $M$ of $E$ such that ${\rm deg}_{M}(v) \le 1$ for every $v \in V \setminus X$ and $|M| = k+1$.  
\end{theorem}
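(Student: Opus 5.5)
We reduce the statement to weighted vertex cover / $b$-matching duality in bipartite graphs, in the spirit of the proof of \Cref{prop:np}. Define the capacity vector $b \in \mathbb{Z}_+^V$ by $b_v = 1$ for $v \in V \setminus X$ and $b_v = k+1$ for $v \in X$. A vertex cover $C$ avoiding $X$ has $b$-weight $|C|$, while any vertex cover meeting $X$ has $b$-weight at least $k+1$. So $X$ hits $\mathcal{C}(G,k)$ if and only if the minimum $b$-weight of a vertex cover of $G$ is at least $k+1$.

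For bipartite graphs, the minimum $b$-weight vertex cover equals the maximum size of a $b$-matching, i.e.\ a multiset $M$ with ${\rm deg}_M(v) \le b_v$ for all $v$. This is the integral LP duality coming from total unimodularity of the bipartite incidence matrix (Egerv\'ary's theorem; see \cite[Chapter 21]{Sch03}). If one prefers a self-contained argument, replace each vertex $v$ by $b_v$ copies and each edge $\{u,v\}$ by the complete bipartite graph between the copies. Then apply K\H{o}nig's theorem. A minimal vertex cover of the blown-up graph can be taken to contain either all copies of a vertex or none, and $b$-matchings correspond to matchings of the blow-up. Hence $X$ hits $\mathcal{C}(G,k)$ if and only if some $b$-matching has size at least $k+1$.

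It remains to bring this to the exact form in the statement. Suppose a $b$-matching $M$ has $|M| \ge k+1$. Deleting edges preserves the degree bounds, so we may shrink $M$ to size exactly $k+1$; it still satisfies ${\rm deg}_M(v)\le 1$ on $V\setminus X$. Conversely, suppose $M$ has $|M|=k+1$ and ${\rm deg}_M(v)\le 1$ for $v \in V\setminus X$. For $v \in X$ we automatically have ${\rm deg}_M(v)\le |M| = k+1 = b_v$, so $M$ is a $b$-matching of size $k+1$. This is exactly why the large weight was chosen to be $k+1$. One can also check this direction directly: if $C$ is a vertex cover with $C\cap X=\emptyset$, each edge of $M$ has an endpoint in $C \subseteq V\setminus X$, and each such vertex absorbs at most one edge of $M$, so $|C|\ge k+1$.

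The only substantive step is invoking integral $b$-matching/vertex-cover duality for bipartite graphs with capacities. Everything else is bookkeeping, and for that step I would cite the standard result or use the blow-up reduction to K\H{o}nig's theorem.
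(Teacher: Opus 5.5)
Your proposal is correct and follows the paper's proof essentially step for step. Both use the same capacity vector $b$ (with $b_v=k+1$ on $X$), the same appeal to bipartite $b$-matching/vertex-cover duality (the paper cites Schrijver's Theorem 21.1), and the same observation that the degree bound at vertices of $X$ is automatic once $|M|=k+1$. Your optional blow-up reduction to K\H{o}nig's theorem is also sound, since copies of a vertex are false twins and so minimal covers take whole copy classes. One small correction: the min-max you need is the bipartite $b$-matching theorem, not Egerv\'ary's edge-weighted theorem.
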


\begin{proof}
We define $b \in \mathbb{Z}^{V}_{+}$ by $b_v = 1$ for $v \in V \setminus X$ and $b_v = k+1$ for $v \in X$, 
and consider the following problem: 
    Find a vertex cover $C \subseteq V$ of $G$ that minimizes $\sum_{v \in C} b_v$.    
As in the proof for Proposition~\ref{prop:np}, 
the optimal value of this problem is at least $k+1$ if and only if $X$ is a hitting set of $\mathcal{C}(G,k)$.

This problem can be written as the following integer linear programming problem $\mathsf{IP}(X)$: 
\begin{alignat*}{3}
    \text{minimize}
    &\quad \displaystyle\sum_{v \in V} b_v y_v
    &\quad
    \text{subject to}
    &\quad y_u + y_v \geq 1 
    &&\quad (\forall\{u,v\} \in E), \\
    &\quad
    &\quad
    &\quad y_v \in \mathbb{Z}_{+} 
    &&\quad (\forall v \in V). \nonumber
\end{alignat*}
By the duality between $b$-matchings and vertex covers~ (see~\cite[Theorem 21.1]{Sch03}), 
the optimal value of $\mathsf{IP}(X)$ is equal to the maximum size of a $b$-matching in $G$, that is, 
the optimal value of the following problem $\mathsf{DIP}(X)$:
\begin{alignat*}{3}
    \text{maximize}
    &\quad 
    \displaystyle\sum_{e \in E}\alpha_e
    &\quad
    \text{subject to}
    &\quad \sum_{e \in \delta_G(v)} \alpha_e \leq b_v 
    &&\quad (\forall v\in V), \\
    &\quad
    &\quad
    &\quad \alpha_e \in \mathbb{Z}_+ 
    &&\quad (\forall e \in E). 
\end{alignat*}

Therefore, $X$ is a hitting set of $\mathcal{C}(G,k)$ if and only if there exists a $b$-matching $M$ of size $k+1$. 
Since the constraint ${\rm deg}_{M}(v) \le b_v$ is redundant for $v \in X$, the constraints on $M$ can be rewritten as ${\rm deg}_{M}(v) \le 1$ for every $v \in V \setminus X$ and $|M| = k+1$. 
\end{proof}

We say that such an edge multiset $M$ is a \emph{certificate} for a solution $X \subseteq V$.

\subsection{An XP Algorithm}
\label{sec:XPalgo}

Recall that we denote by $\mu (G)$ the size of a maximum matching of a graph $G$, 
which is equal to the minimum vertex cover size when $G$ is bipartite.
The objective of this subsection is to show the following theorem. 
Note that we refer to \emph{minimal} hitting sets, not minimum ones.

\begin{theorem}
    \label{thm:solbound}
    Let $G=(V, E)$ be a bipartite graph, and $k$ be an integer with $k \ge \mu(G)$.
    Furthermore, let $\ell=k-\mu (G)$. 
    Then, every minimal hitting set $X \subseteq V$ of the family $\mathcal{C}(G,k)$ of the vertex covers of $G$ of size at most $k$ has at most $2(\ell+1)$ vertices.
\end{theorem}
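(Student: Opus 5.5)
The plan is to reduce the statement, via \Cref{thm:dualformulate}, to a statement about certificates, and then to build a certificate that is ``close'' to a maximum matching. We compare the certificate with a maximum matching $M_0$ of $G$ (so $|M_0| = \mu(G)$) and decompose their difference into alternating walks. The key fact to establish is the following: \emph{whenever $X$ hits $\mathcal{C}(G,k)$, there is a certificate $M^\ast$ for $X$ in which at most $2(\ell+1)$ vertices have ${\rm deg}_{M^\ast}(v) \ge 2$.}

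Given this fact, the theorem follows from minimality. Let $X' = \{v \in V \mid {\rm deg}_{M^\ast}(v) \ge 2\}$. Since $M^\ast$ is a certificate, $X' \subseteq X$. Moreover, $M^\ast$ satisfies ${\rm deg}_{M^\ast}(v) \le 1$ for all $v \in V \setminus X'$ and $|M^\ast| = k+1$. So $M^\ast$ is also a certificate for $X'$, and \Cref{thm:dualformulate} shows that $X'$ hits $\mathcal{C}(G,k)$. Minimality of $X$ then forces $X = X'$, hence $|X| \le 2(\ell+1)$. (Note that a vertex $x \in X$ with ${\rm deg}_M(x) \le 1$ could always be dropped, so minimality is exactly the right hypothesis to combine with such a sparse certificate.)

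To construct $M^\ast$, I start from any certificate $M$, with $|M| = k+1$, and consider the multigraph $M \mathbin{\triangle} M_0$, taking multiplicities into account. The vertices of $V \setminus X$ have degree at most $1$ in both $M$ and $M_0$. Hence at such a vertex the $M$-edges and $M_0$-edges of the difference can be paired up as in the usual alternating-path argument. At vertices $x \in X$, I pair $M$-edges with $M_0$-edges arbitrarily, as far as possible, and treat the unpaired $M$-edges at $x$ as walk endpoints. This decomposes the difference into edge-disjoint alternating walks and closed walks. Each walk changes the size by $-1$, $0$, or $+1$ when we switch from $M_0$ along it. Because $|M| - |M_0| = \ell+1$, there must be at least $\ell+1$ walks of gain $+1$; these are $M$-augmenting walks whose both end edges are $M$-edges.

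I select exactly $\ell+1$ such augmenting walks $W_1,\dots,W_{\ell+1}$ and set $M^\ast = M_0 \mathbin{\triangle} W_1 \mathbin{\triangle} \cdots \mathbin{\triangle} W_{\ell+1}$, with multiset addition of the $M$-edges and removal of the $M_0$-edges along each walk. Then $|M^\ast| = \mu(G) + \ell + 1 = k+1$. At every vertex that is an interior point of a walk, the degree is unchanged relative to $M_0$: the walk traverses it by one incoming and one outgoing edge of opposite types. Therefore only walk endpoints can gain degree, and the degree at an endpoint rises by one per walk ending there. Consequently, at most $2(\ell+1)$ vertices end up with degree at least $2$. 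Every vertex whose degree exceeds $1$ is an endpoint of a walk whose end edge is an unpaired $M$-edge, and by construction such endpoints lie in $X$. So $M^\ast$ still respects ${\rm deg}\le 1$ outside $X$, and it is a certificate with the required sparsity.

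The main obstacle is making the walk decomposition rigorous in the multiset setting. Edges of $M$ may carry multiplicity, and $X$-vertices can have high degree, so the pieces are walks rather than simple paths. I must check two things. First, the degree bookkeeping at interior vertices that lie in $X$ still works even when a walk revisits such a vertex. Second, an endpoint at a vertex of $V \setminus X$ never acquires degree $2$. I would handle this by a careful local pairing at each vertex, essentially an Euler-type splitting of the difference multigraph. If this becomes delicate, a fallback is to count degree excess directly: define $\sum_v \max(0, {\rm deg}_{M^\ast}(v) - 1)$ and show it is at most $2(\ell+1)$ for a certificate $M^\ast$ chosen to minimize $|M^\ast \mathbin{\triangle} M_0|$.
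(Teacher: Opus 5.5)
Your proof is correct, but it takes a different route from the paper. The paper does no decomposition. It fixes a maximum matching $M^*$ and picks a certificate $M$ for $X$ minimizing $|M^*\setminus M|$. It then proves two pointwise inequalities: ${\rm deg}_M(x)\ge 2\ge {\rm deg}_{M^*}(x)+1$ for $x\in X$, using minimality of $X$ vertex by vertex, and ${\rm deg}_M(v)\ge{\rm deg}_{M^*}(v)$ for all $v$, via a one-edge exchange that would otherwise decrease $|M^*\setminus M|$. Double counting then gives $2|M|\ge 2|M^*|+|X|$.

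You instead construct a fresh certificate: $M_0$ plus $\ell+1$ edge-disjoint augmenting alternating trails taken from $M\mathbin{\triangle}M_0$. In it only trail endpoints can be overloaded, and you use minimality once, globally, to identify $X$ with the overloaded set. The paper's extremal argument is shorter and avoids multiset walk bookkeeping. Yours is more constructive: it exhibits a minimal hitting set as the overloaded endpoints of $\ell+1$ augmenting trails. This is essentially the structure the paper re-derives via flow decomposition in $D(M^*)$ for $\ell=0$ in Lemmas~\ref{lem:size1} and~\ref{lem:size2}.

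Two small corrections are needed. First, the trails are $M_0$-augmenting, not $M$-augmenting. Second, ``such endpoints lie in $X$'' is false as stated, since a selected trail may end at an $M_0$-exposed vertex outside $X$. What you need, and what holds, is the exact identity
\[
{\rm deg}_{M^\ast}(v)={\rm deg}_{M_0}(v)+(\text{number of selected-trail endpoints at } v),
\]
since each pass through $v$ uses one edge of each type, regardless of revisits. Add to this the following observation about $v\notin X$. There both $M$ and $M_0$ have degree at most $1$, and a shared edge cancels in the difference. So an unpaired $M$-end at $v$ forces ${\rm deg}_{M_0}(v)=0$, and at most one trail can end there, giving ${\rm deg}_{M^\ast}(v)\le 1$. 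This settles both obstacles you flagged.
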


\red{We notice that \Cref{thm:solbound} miserably fails for non-bipartite graphs. For example, let $H$ be a triangle with a degree-one vertex attached to each vertex (so $H$ is a six-vertex graph), and let $G$ be the disjoint union of $k/3$ copies of $H$, where $k$ is a multiple of $3$. Then, $\mu(G) = k$ and so $\ell = k - \mu(G) = 0$. On the other hand, every minimal hitting set of $\mathcal{C}(G,k)$ has $2k/3$ vertices since every minimal hitting set of $\mathcal{C}(H,3)$ has two vertices. Note that in this example the maximum size of a matching is equal to the minimum size of a vertex cover.}

\begin{proof}[Proof \red{of \Cref{thm:solbound}}]
Let $X$ be a minimal hitting set of $\mathcal{C}(G,k)$, and 
let $M^*$ be a maximum matching in $G$. 
Note that $|M^*| = \mu(G)$. 

By \Cref{thm:dualformulate}, there exists a certificate $M$ for a solution $X$, that is, 
$M$ is a multiset of edges in $G$ such that
$|M \cap \delta_G(v)| \le 1$ for every $v \in V \setminus X$ and $|M| = k+1$. 
Among such multisets, we choose $M$ to \red{minimize $|M^* \setminus M|$, 
where $M^* \setminus M$ denotes the set of edges in $M^*$ that do not occur in $M$.}

We now show the following two claims. 

\begin{claim}\label{clm:degMx}
For any $x\in X$, ${\rm deg}_{M}(x) \ge {\rm deg}_{M^*}(x) +1$. 
\end{claim}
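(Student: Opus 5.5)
The plan is to show the stronger statement ${\rm deg}_{M}(x) \ge 2$ for every $x \in X$. This suffices: $M^*$ is a matching, so ${\rm deg}_{M^*}(x) \le 1$, and therefore ${\rm deg}_{M^*}(x) + 1 \le 2 \le {\rm deg}_{M}(x)$. I do not expect to need the particular choice of $M$ minimizing $|M^* \setminus M|$ for this claim. That choice should only matter for the later counting in the proof of \Cref{thm:solbound}.

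To prove ${\rm deg}_{M}(x) \ge 2$, I will argue by contradiction using the minimality of $X$. Suppose some $x \in X$ has ${\rm deg}_{M}(x) \le 1$, and set $X' = X \setminus \{x\}$. I claim that $M$ itself is a certificate for $X'$ in the sense of \Cref{thm:dualformulate}. Its size is still $|M| = k+1$. Every vertex $v \in V \setminus X'$ satisfies ${\rm deg}_{M}(v) \le 1$. For $v \in V \setminus X$ this holds because $M$ is a certificate for $X$, and for $v = x$ it is the assumption.

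By \Cref{thm:dualformulate}, which holds for an arbitrary subset of $V$, this makes $X'$ a hitting set of $\mathcal{C}(G,k)$. That contradicts the minimality of $X$, since $X'$ is a proper subset of $X$. Hence ${\rm deg}_{M}(x) \ge 2$ for all $x \in X$, and the claim follows.

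The main point needing care is the first step: realizing that the bound ${\rm deg}_{M^*}(x) \le 1$ reduces the claim to ${\rm deg}_{M}(x) \ge 2$, and that removing $x$ from $X$ only adds the single degree constraint at $x$ itself. Beyond that, I expect no real obstacle.
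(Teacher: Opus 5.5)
Your proposal is correct and matches the paper's proof: both show ${\rm deg}_{M}(x) \ge 2$ because otherwise $M$ certifies $X \setminus \{x\}$, contradicting minimality, and then use ${\rm deg}_{M^*}(x) \le 1$ since $M^*$ is a matching. You are also right that the choice of $M$ minimizing $|M^* \setminus M|$ is not needed for this claim; the paper does not use it here either.
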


\begin{proof}[Proof of \Cref{clm:degMx}]
We see that ${\rm deg}_M(x) \ge 2$, since otherwise $X \setminus \{x\}$ is a hitting set of $\mathcal{C}(G,k)$ with a certificate $M$, contradicting the minimality of $X$.
Furthermore, ${\rm deg}_{M^*}(x) \le 1$ holds as $M^*$ is a matching. 
Therefore, ${\rm deg}_{M}(x) \ge 2 \ge {\rm deg}_{M^*}(x) +1$ as claimed.
\end{proof}

\begin{claim}\label{clm:degMv}
For any $v \in V$, ${\rm deg}_{M}(v) \ge {\rm deg}_{M^*}(v)$. 
\end{claim}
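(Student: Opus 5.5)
The proof is by contradiction, exploiting the choice of $M$ minimizing $|M^* \setminus M|$. Suppose some $v \in V$ has ${\rm deg}_M(v) < {\rm deg}_{M^*}(v)$. Since $M^*$ is a matching, this forces ${\rm deg}_{M^*}(v) = 1$ and ${\rm deg}_M(v) = 0$. Let $e^* = \{v,w\}$ be the edge of $M^*$ at $v$; then $e^* \notin M$. The plan is to modify $M$ by an exchange that keeps the certificate conditions (degree at most one on $V \setminus X$, total size $k+1$), puts $e^*$ into $M$, and does not remove any edge of $M^*$. This strictly decreases $|M^* \setminus M|$ and gives the contradiction.

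The exchange depends on the degree of $w$ in $M$.

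\textbf{Case ${\rm deg}_M(w) = 0$.} Take any edge $f$ of $M$ not in $M^*$ and replace one copy of $f$ by $e^*$. Such an $f$ exists: $|M| = k+1 > \mu(G) = |M^*|$, and $M$ cannot consist only of (copies of) $M^*$-edges, because then, together with $e^* \notin M$, it would use fewer than $|M^*|+1$ distinct edges. If $f$ has several copies, we should remove a copy whose edge lies outside $M^*$, so that no edge of $M^*$ is lost; the counting above shows some copy of a non-$M^*$ edge is present. The size stays $k+1$. Degrees at $v$ and $w$ become $1$, and degrees elsewhere do not increase, so the certificate conditions hold.

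\textbf{Case ${\rm deg}_M(w) \ge 1$.} Pick an edge $f = \{w,u\} \in M$ at $w$. Since $e^*$ is the unique $M^*$-edge at $w$ and $f \neq e^*$ (because $e^* \notin M$), we have $f \notin M^*$. Replace one copy of $f$ by $e^*$. The degree of $w$ is unchanged, the degree of $v$ becomes $1$, the degree of $u$ drops, and the size is preserved. Hence the result is again a certificate, $e^*$ now occurs in it, and no $M^*$-edge was removed.

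In both cases $|M^* \setminus M|$ strictly decreases, contradicting the choice of $M$. The main care point is the first case: we must argue that $M$ contains a copy of an edge outside $M^*$. This follows from $|M| = k+1 > |M^*|$ together with $e^* \notin M$, since the multiset $M$ cannot be made up entirely of copies of $M^*$-edges while missing $e^*$... More precisely, if every copy in $M$ were an $M^*$-edge other than $e^*$, then each such edge $g = \{a,b\}$ has an endpoint outside $X$ (unless both lie in $X$), and the degree bounds at non-$X$ vertices limit its multiplicity. To avoid this subtlety entirely, I would instead take $f$ to be any copy in $M$ of an edge with multiplicity at least $2$ or outside $M^*$. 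If all edges of $M$ are $M^*$-edges with multiplicity $1$, then $|M| \le |M^*| - 1 < k+1$, which is impossible. Removing one copy of a multiplicity-$\ge 2$ edge does not change the set $M^* \setminus M$, so the argument goes through.
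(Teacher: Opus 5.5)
Your proof is correct and is essentially the paper's argument: the same two-case exchange $M - f + e^*$ depending on whether ${\rm deg}_M(w)=0$, with the same choice in the first case of an edge that lies outside $M^*$ or has multiplicity at least two. That fallback choice is needed, not just cautious. Your first justification, counting distinct edges, does not rule out $M$ consisting only of copies of $M^*$-edges. This happens when $X$ contains both endpoints of some $M^*$-edge, which can then carry all the surplus multiplicity.
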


\begin{proof}[Proof of \Cref{clm:degMv}]
Assume to the contrary that there exists a vertex $v \in V$ such that ${\rm deg}_{M}(v) < {\rm deg}_{M^*}(v)$.
Since ${\rm deg}_{M^*}(v) \le 1$, this means that ${\rm deg}_{M^*}(v)=1$ and ${\rm deg}_{M}(v)= 0$.
Let $e = \{v, w\}$ be the unique edge in $M^*$ incident to $v$, where $w \in V \setminus \{v\}$. 

First, consider the case when  ${\rm deg}_{M}(w)= 0$.
Since $|M| > |M^*|$, there exists an edge $e'$ in $M$ 
\red{such that $e' \not\in M^*$ or the multiplicity of $e'$ in $M$ is at least two.} 
Then, $M' := M - e' + e$ is a certificate for $X$, where 
``$-e'$'' (resp.~``$+e$'') denotes decreasing (resp.~increasing) the multiplicity of $e'$ (resp.~$e$) by one. 
This is because ${\rm deg}_{M'}(v)= {\rm deg}_{M'}(w)=1$ and 
${\rm deg}_{M'}(u) \le {\rm deg}_{M}(u)$ for any $u \in V \setminus \{v, w\}$. 
Since \red{$|M^* \setminus M'| = |M^* \setminus M| - 1$}, 
this contradicts the maximality of $|M \cap M^*|$. 

Second, consider the case when ${\rm deg}_{M}(w)\ge 1$.
Let $e'$ be an edge in $M$ incident to $w$. 
Then, $M' := M - e' + e$ is a certificate for $X$, because ${\rm deg}_{M'}(v)=1$ and 
${\rm deg}_{M'}(u) \le {\rm deg}_{M}(u)$ for any $u \in V \setminus \{v\}$. 
Since $M^*$ is a matching and $e \in M^*$ is incident to $w$, $e'$ is not contained in $M^*$, 
which shows that \red{$|M^* \setminus M'| = |M^* \setminus M| - 1$}, contradicting the maximality of $|M \cap M^*|$. 
\end{proof}

By Claims \ref{clm:degMx} and \ref{clm:degMv}, we obtain
\begin{equation*}
2|M| = \sum_{v \in V} {\rm deg}_M(v) \ge \sum_{v \in V} {\rm deg}_{M^*}(v) + |X| = 2|M^*| + |X|. 
\end{equation*}
Therefore, $|X| \le 2|M| - 2|M^*| = 2 (k+1) - 2 \mu(G) = 2 (\ell + 1)$. 
\end{proof}

\begin{corollary}\label{cor:XPinl}
\bvci\ can be solved in $|V|^{O(\ell)}$ time, where $\ell=k-\mu (G)$. 
\end{corollary}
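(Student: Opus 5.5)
The plan is to combine \Cref{thm:solbound} with the polynomial-time verification from \Cref{prop:np}, and to enumerate only small candidate sets.

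First, handle trivial cases. Compute $\mu(G)$ by a maximum matching algorithm in polynomial time.
\begin{itemize}
\item If $k < \mu(G)$, then $\mathcal{C}(G,k)$ is empty. Hence $X=\emptyset$ is a solution, and the answer is yes.
\item Otherwise set $\ell = k-\mu(G) \ge 0$.
\end{itemize}

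Second, reduce to minimal solutions. Suppose a solution $X \subseteq B$ with $|X| \le t$ exists. Then some minimal hitting set $X' \subseteq X$ of $\mathcal{C}(G,k)$ is also a solution. Indeed, removing elements one at a time while the hitting property is preserved keeps $X' \subseteq B$ and $|X'| \le t$. By \Cref{thm:solbound}, $|X'| \le 2(\ell+1)$. Therefore the instance is a yes-instance if and only if there is a hitting set $X \subseteq B$ with $|X| \le \min\{t, 2(\ell+1)\}$.

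Third, enumerate every subset $X \subseteq B$ of size at most $\min\{t,2(\ell+1)\}$. There are at most $\sum_{i \le 2\ell+2} \binom{|V|}{i} = |V|^{O(\ell)}$ such subsets. For each one, test whether it hits $\mathcal{C}(G,k)$ as in \Cref{prop:np}. That is, give weight $k+1$ to the vertices of $X$ and weight $1$ to all other vertices, and check whether the minimum weight vertex cover has weight greater than $k$. Equivalently, by \Cref{thm:dualformulate}, check whether the maximum $b$-matching has size at least $k+1$. Each test takes polynomial time, so the total running time is $|V|^{O(\ell)}$.

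No step here is a real obstacle, since the substantive work is already done in \Cref{thm:solbound}. The one point needing care is that \Cref{thm:solbound} bounds minimal hitting sets over all of $V$, while we need solutions inside $B$ of size at most $t$. This is resolved by the observation that shrinking a solution to a minimal hitting set stays inside $B$ and does not increase its size.
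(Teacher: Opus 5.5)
Your proposal is correct and matches the paper's proof. It returns $X=\emptyset$ when $k<\mu(G)$, and otherwise enumerates all $X\subseteq B$ with $|X|\le\min\{t,2(\ell+1)\}$, testing each by a minimum-weight vertex cover computation, with correctness resting on \Cref{thm:solbound}; your explicit argument that shrinking a solution to a minimal hitting set stays inside $B$ and keeps size at most $t$ is a step the paper leaves implicit.
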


\begin{proof}
Suppose we are given an instance $(G, t, k, B)$ of \bvci. 

If $\ell=k-\mu (G) < 0$, then we can immediately conclude that 
$X = \emptyset$ is a solution. Here we note that 
$\mu(G)$ can be computed in $|V|^{O(1)}$ time. 

When $\ell \ge 0$, 
we consider the following algorithm. 
We enumerate all the sets $X \subseteq B$ with $|X| \le \min \{2(\ell +1), t\}$. 
Then, there are at most $|V|^{2\ell + 2}$ choices of $X$. 
For each $X$, we test whether $X$ is a solution to \bvci\ or not, 
which can be done in $|V|^{O(1)}$ time by finding a minimum weight vertex cover in $G$ (see the proof for Proposition \ref{prop:np}). 
If no solution is found during this procedure, we conclude that the instance $(G, t, k, B)$ has no solution. 
This algorithm runs in $|V|^{O(\ell)}$ time, and 
the validity is guaranteed by \Cref{thm:solbound}. 
\end{proof}

Since \bminvci\ amounts to the case of $\ell=0$, Corollary~\ref{cor:XPinl} implies that it can be solved in polynomial time. 
Notice that the next corollary can also be obtained from a result by
Bentz, Costa, Picouleau, Ries, and de Werra~\cite[Theorem 5.4]{DBLP:journals/jda/BentzCPRW12}.

\begin{corollary}\label{cor:polyminimum}
\bminvci\ can be solved in polynomial time. 
\end{corollary}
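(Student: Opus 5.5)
The statement is a direct specialization of Corollary~\ref{cor:XPinl}, so the plan is to instantiate that corollary at $\ell = 0$. Given an instance $(G, t, B)$ of \bminvci, we first compute a maximum matching of the bipartite graph $G$ in polynomial time, which gives $\mu(G)$. By K\H{o}nig's theorem, this number equals the minimum vertex cover size. We then set $k = \mu(G)$, producing the instance $(G, t, k, B)$ of \bvci. The family $\mathcal{C}(G,k)$ is then exactly the family of minimum vertex covers of $G$, because every vertex cover has size at least $\mu(G)$. Hence a set $X \subseteq B$ with $|X| \le t$ is a solution to the \bminvci instance if and only if it is a solution to the constructed \bvci instance.

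For the constructed instance, $\ell = k - \mu(G) = 0$. By \Cref{thm:solbound}, every minimal hitting set of $\mathcal{C}(G,k)$ has at most $2(\ell+1) = 2$ vertices. It therefore suffices to enumerate all subsets $X \subseteq B$ with $|X| \le \min\{2, t\}$, of which there are $O(|V|^2)$. For each candidate $X$, we test whether it hits $\mathcal{C}(G,k)$ with the weighted vertex cover test from the proof of \Cref{prop:np}: assign weight $k+1$ to the vertices of $X$ and weight $1$ to all others, and check that the minimum weight vertex cover has weight greater than $k$. Equivalently, by \Cref{thm:dualformulate}, we can check that a maximum $b$-matching has size at least $k+1$. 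Either test runs in polynomial time on bipartite graphs.

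Correctness needs one reduction from arbitrary solutions to minimal ones. If some solution $X$ with $|X| \le t$ exists, it contains a minimal hitting set $X' \subseteq X \subseteq B$, and $X'$ has size at most $\min\{2, t\}$. The enumeration therefore finds a solution whenever one exists, and otherwise we correctly answer no. There is no real obstacle here: the substantive work is \Cref{thm:solbound}, and the only point needing care is that $k = \mu(G)$ makes the two problems coincide. The total running time is $|V|^{O(1)}$.
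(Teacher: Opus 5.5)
Your proposal is correct and matches the paper's argument. The paper's proof simply observes that \bminvci\ is the case $\ell = 0$ of Corollary~\ref{cor:XPinl}; your write-up unpacks that corollary's proof: set $k=\mu(G)$, enumerate $X \subseteq B$ with $|X| \le \min\{2,t\}$ (justified by Theorem~\ref{thm:solbound} and passing to a minimal hitting set), and test each candidate with the weighted vertex cover check from Proposition~\ref{prop:np}.
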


In contrast, we note again that this problem becomes $\Sigma_2^p$-complete when $G$ is allowed to be a non-bipartite graph, as we will see in Appendix~\ref{sec:sigma2pmin}.

\subsection{A Faster Algorithm for \bminvci}
\label{sec:almostlinear}

As shown in Corollary \ref{cor:polyminimum}, \bminvci\ can be solved in polynomial time. 
In this subsection, we present a faster algorithm for this problem by revealing the structure of instances with small solutions. 

By \Cref{thm:solbound} with $\ell = 0$,  
if an instance of \bminvci\ has a solution, then there exists a solution of size at most $2$. 
Let $\minvcfamily$ denote the family of minimum vertex covers of $G$. 
Since $\minvcfamily \neq \emptyset$, 
we can see that there exists no solution of size $0$ (i.e., $X=\emptyset$ is not a solution).  
To characterize instances that have a solution of size $1$ or $2$, 
we introduce an auxiliary digraph.

Let $G=(V, E)$ be a bipartite graph with bipartition $(V_1, V_2)$ of $V$. 
For a maximum matching $M^* \subseteq E$ in $G$, 
define an auxiliary digraph $D(M^*) = (V, A)$  
by
\begin{align*}
A_1 &= \{ (v_1, v_2) \mid  v_1 \in V_1,\ v_2 \in V_2,\ \{v_1, v_2\} \in E \}, \\
A_2 &=  \{ (v_2, v_1) \mid  v_1 \in V_1,\ v_2 \in V_2,\ \{v_1, v_2\} \in M^* \}, \\
A   &= A_1 \cup A_2. 
\end{align*}
That is, $A_1$ is obtained from $E$ by orienting every edge from $V_1$ to $V_2$, 
$A_2$ is obtained from $M^*$ by orienting every edge from $V_2$ to $V_1$, 
and $A$ is their union. 
For $i \in \{1, 2\}$, let $U_i$ be the set of vertices in $V_i$ not covered by $M^*$. 
For a dipath $P$ in $D(M^*)$, let $E(P) \subseteq E$ denote the set of edges corresponding to $P$. 
Note that $E(P)$ forms a path that traverses edges in $E \setminus M^*$ and $M^*$ alternately. 
Note also that $D(M^*)$ has no dipath $P$ from $U_1$ to $U_2$, since otherwise $M^* \mathbin{\triangle} E(P)$ is a larger matching than $M^*$.

By using the auxiliary digraph, the minimum solution size can be characterized as follows.

\begin{lemma}\label{lem:size1}
Let $G=(V, E)$ be a bipartite graph, 
let $B \subseteq V$, and let $M^*$ be a maximum matching in $G$. 
Then, there exists a hitting set $X \subseteq B$ of $\minvcfamily$ such that $|X| = 1$
if and only if 
$D(M^*)$ contains a dipath from $U_1$ to $V_2 \cap B$ or from $V_1 \cap B$ to $U_2$. 
\end{lemma}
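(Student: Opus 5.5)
Since a singleton $X=\{x\}$ hits $\minvcfamily$ exactly when $x$ lies in every minimum vertex cover, the statement reduces to a characterization of the vertices common to all minimum vertex covers. I will prove: a vertex $x\in V$ lies in every minimum vertex cover if and only if either $x\in V_2$ and $D(M^*)$ has a dipath from $U_1$ to $x$, or $x\in V_1$ and $D(M^*)$ has a dipath from $x$ to $U_2$. Applying this to each $x\in B$ gives the lemma. The natural tool is \Cref{thm:dualformulate} with $k=\mu(G)$ and $X=\{x\}$. It says $x$ hits every minimum vertex cover if and only if there is a multiset $M$ with $|M|=\mu(G)+1$ and ${\rm deg}_M(v)\le 1$ for all $v\neq x$. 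As noted in the footnote, such an $M$ is a set, since no edge has both ends in $\{x\}$. Hence the condition is that $M$ is a set of $\mu(G)+1$ edges forming a matching except at $x$.

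For the ``if'' direction, suppose $x\in V_2$ and $P$ is a dipath from $u\in U_1$ to $x$. Its edge set $E(P)$ is an alternating path starting at the exposed vertex $u$ with a non-matching edge and ending at $x$ with a non-matching edge, because arcs into $V_2$ come from $A_1$. Then $M=M^*\mathbin{\triangle}E(P)$ has size $\mu(G)+1$. Every internal vertex and $u$ have degree at most $1$ in $M$, and only $x$ may have degree $2$, namely when $x$ is matched in $M^*$. So $M$ is a certificate. The case $x\in V_1$ with a dipath from $x$ to $U_2$ is symmetric.

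For the ``only if'' direction, take a certificate $M$ and consider $M\mathbin{\triangle}M^*$. Every vertex other than $x$ has degree at most $2$ in it, and $x$ has degree at most $3$. I would decompose it into alternating paths and cycles, treating $x$ as possibly being the endpoint or the interior of two pieces. Since $|M|>|M^*|$, some component has more $M$-edges than $M^*$-edges. If that component avoided $x$, it would be an $M^*$-augmenting path, contradicting maximality of $M^*$. So it contains $x$. Splitting at $x$, I expect to extract an alternating path from an $M^*$-exposed vertex $u$ to $x$ whose first and last edges lie in $M\setminus M^*$. Since $G$ is bipartite, such a path has odd length, so $u$ and $x$ lie on opposite sides. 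Oriented as in $D(M^*)$, it gives a dipath from $U_1$ to $x$ when $x\in V_2$, or from $x$ to $U_2$ when $x\in V_1$.

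The main obstacle is this extraction step, because $x$ can have degree up to $3$ in the symmetric difference. I would handle it by a counting argument. Let $a$ be the number of $M$-edges and $b\le 1$ the number of $M^*$-edges at $x$. Cut $x$ into separate copies so that every piece becomes an ordinary alternating path or cycle. The excess $|M|-|M^*|\ge 1$ must then be carried by some path with an excess $M$-edge. By maximality of $M^*$, such a path cannot have both ends at exposed vertices distinct from $x$, so one end is a copy of $x$. Its other end must be $M^*$-exposed, since otherwise the path would not carry an excess $M$-edge. A cleaner alternative, if this bookkeeping gets messy, is the Kőnig-type construction. Let $Z$ be the set of vertices reachable in $D(M^*)$ from $U_1$ and $W$ the set of vertices that reach $U_2$. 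Then $(V_1\setminus Z)\cup(V_2\cap Z)$ is a minimum vertex cover, and so is the analogous cover built from $W$. A vertex $x$ violating both conditions is omitted by one of these two covers; for this, check that $x\in V_2\setminus Z$ is not in the first cover and $x\in V_1\setminus W$ is not in the second. This directly gives a minimum vertex cover avoiding $x$, and I would likely use this route for the ``only if'' direction.
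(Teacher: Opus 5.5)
Your proposal is correct. The ``if'' direction is the same as the paper's: augment $M^*$ along $E(P)$ and apply \Cref{thm:dualformulate} with $k=\mu(G)$. Your preferred ``only if'' argument, however, takes a genuinely different route.

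The paper stays on the dual side. It takes a certificate $M$ minimizing $|M^*\setminus M|$ and reuses Claims~\ref{clm:degMx} and~\ref{clm:degMv} from the proof of \Cref{thm:solbound}, which give ${\rm deg}_M\ge{\rm deg}_{M^*}$ everywhere with strict inequality at $x$. It then applies flow decomposition to the arc multiset $A(M)\cup A_2$ to get a dipath from $x$ to a vertex with surplus in-degree, and that vertex is forced into $U_2$.

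Your K\H{o}nig route avoids certificates altogether. The sets $C_Z=(V_1\setminus Z)\cup(V_2\cap Z)$ and $C_W=(V_1\cap W)\cup(V_2\setminus W)$ are vertex covers by the closure properties of $Z$ and $W$. They have size $|M^*|$ because every vertex in them is $M^*$-covered and every $M^*$-edge has exactly one end in them. The $M^*$-coveredness of $V_2\cap Z$ and of $V_1\cap W$ is where the absence of a $U_1$--$U_2$ dipath is used, so state this explicitly when you write it out. This route is more elementary, and it also yields the classical description of the vertices common to all minimum covers, namely $(V_2\cap Z)\cup(V_1\cap W)$.

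What the paper's route buys is uniformity. The same certificate-plus-flow argument is reused almost verbatim for \Cref{lem:size2}, where the hitting set has two vertices and a single explicit reachability cover does not obviously suffice.

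On your secondary symmetric-difference route, one claim is wrong. The degree of $x$ in $M\mathbin{\triangle}M^*$ is not bounded by $3$, because a certificate can put arbitrarily many edges at $x$. For example, let $G$ be a star $K_{1,a}$ centred at $x$ plus $a-2$ disjoint edges. Then $\mu(G)=a-1$, and the $a$ star edges alone form a certificate, so $x$ can have degree $a-1$ in the symmetric difference. This slip is harmless: cutting $x$ into one copy per incident edge works for any degree. Your parity remark (an excess piece has odd length) also rules out an excess piece with both ends at copies of $x$.
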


\begin{proof}
We first show the sufficiency (``if'' part). 
Suppose that $D(M^*)$ contains a dipath $P$ from $u \in U_1$ to $x \in V_2 \cap B$. 
Then, $M := M^* \mathbin{\triangle} E(P)$ is an edge set such that 
$|M| = |M^*| + 1$, ${\rm deg}_{M}(v) = {\rm deg}_{M^*}(v) \le 1$ for $v \in V \setminus \{u, x\}$, and ${\rm deg}_{M}(u) = 1$. 
This shows that $X = \{x\}$ is a hitting set of $\minvcfamily$, 
because $M$ satisfies the conditions in \Cref{thm:dualformulate} with $k = \mu(G)$. 
A similar argument applies when $D(M^*)$ contains a dipath from $V_1 \cap B$ to $U_2$. 

We next show the necessity (``only if'' part). 
Suppose that there exists a hitting set $X = \{x\}$ of $\minvcfamily$,
where $x \in V_1 \cap B$; 
a similar argument applies when $x \in V_2 \cap B$. 
Note that $X$ is a minimal hitting set, because 
$\emptyset$ is not a hitting set. 
By \Cref{thm:dualformulate} with $k = \mu(G)$, there exists a multiset $M$ of edges in $E$ such that $|M| = |M^*|+1$ and 
$|M \cap \delta_G(v)| \le 1$ for every $v \in V - x$. 
Among such multisets, we choose $M$ to 
\red{minimize $|M^* \setminus M|$.}
Then, as in the proof of Theorem~\ref{thm:solbound}, 
Claims \ref{clm:degMx} and \ref{clm:degMv} hold. 

Let $A(M)$ be the multiset of arcs in $A_1$ that is obtained from $M$ by orienting every edge from $V_1$ to $V_2$, and  
let $A' = A(M) \cup A_2$. 
Note that $A'$ is a multiset of arcs in $A$. 
Since $x \in V_1$, the outdegree and indegree of $x$ in $A'$ are ${\rm deg}_{M}(x)$ and ${\rm deg}_{M^*}(x)$, respectively. 
Thus, Claim \ref{clm:degMx} implies that $x$ has more outgoing arcs than incoming arcs in $A'$. 
Then, by the standard flow decomposition argument (see~\cite[Section 10.3]{Sch03}), 
$A'$ contains a dipath $P$ from $x$ to some vertex $y \in V$ such that 
$y$ has more incoming arcs than outgoing arcs in $A'$.

If $y \in V_1$, then the outdegree and indegree of $y$ in $A'$ are ${\rm deg}_{M}(y)$ and ${\rm deg}_{M^*}(y)$, respectively. 
In this case, we obtain ${\rm deg}_{M}(y) < {\rm deg}_{M^*}(y)$, a contradiction to Claim~\ref{clm:degMv}.
Therefore, $y \in V_2$ holds. 
Then, the outdegree and indegree of $y$ in $A'$ are ${\rm deg}_{M^*}(y)$ and ${\rm deg}_{M}(y)$, respectively,  
and ${\rm deg}_{M}(y) > {\rm deg}_{M^*}(y)$ holds. 
Since $y\neq x$ implies $y \in V \setminus X$, 
we obtain ${\rm deg}_{M}(y) \le 1$, and hence ${\rm deg}_{M^*}(y)=0$. 
Therefore, $y \in U_2$ holds and $P$ is a desired dipath.
\end{proof}

\begin{lemma}\label{lem:size2}
Let $G=(V, E)$ be a bipartite graph, 
let $B \subseteq V$, and let $M^*$ be a maximum matching in $G$.
Suppose that there exists no hitting set $X' \subseteq B$ of $\minvcfamily$ such that $|X'| = 1$.
Then, there exists a hitting set $X \subseteq B$ of $\minvcfamily$
such that $|X| = 2$
if and only if 
$D(M^*)$ contains a dipath from $V_1 \cap B$ to $V_2 \cap B$. 
\end{lemma}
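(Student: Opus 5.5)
We mirror the proof of \Cref{lem:size1}, using \Cref{thm:dualformulate} with $k=\mu(G)$ in both directions.

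\emph{Sufficiency.} Suppose $D(M^*)$ contains a dipath $P$ from $x \in V_1 \cap B$ to $y \in V_2 \cap B$. Any dipath in $D(M^*)$ alternates arcs of $A_1$ and $A_2$. A path starting in $V_1$ and ending in $V_2$ therefore has odd length, with first and last arcs in $A_1$. Hence $E(P)$ has one more non-matching edge than matching edges. Set $M := M^* \mathbin{\triangle} E(P)$, so $|M| = |M^*|+1 = \mu(G)+1$. Every internal vertex of $P$ keeps degree at most $1$ in $M$. The endpoints $x$ and $y$ may reach degree $2$ if they were matched in $M^*$, which is allowed since they lie in $X=\{x,y\}$. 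If $x=y$ is impossible by bipartiteness, so $|X|=2$. By \Cref{thm:dualformulate}, $X$ hits $\minvcfamily$.

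\emph{Necessity.} Let $X \subseteq B$ with $|X|=2$ hit $\minvcfamily$. By hypothesis no singleton subset of $B$ is a hitting set, and $\emptyset$ is not one either, so $X$ is minimal. Take a certificate $M$ with $|M|=\mu(G)+1$ that minimizes $|M^*\setminus M|$. Then Claims~\ref{clm:degMx} and~\ref{clm:degMv} hold. Define the arc multiset $A' = A(M)\cup A_2$ as in \Cref{lem:size1}: $M$ oriented $V_1\to V_2$, plus $M^*$ oriented $V_2\to V_1$. At $v\in V_1$ the excess (out minus in) is $\deg_M(v)-\deg_{M^*}(v)\ge 0$. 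At $v\in V_2$ it is $\deg_{M^*}(v)-\deg_M(v)\le 0$. The total excess is zero.

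Now split by the sides of $X$. If some $x\in X\cap V_1$ exists, it has positive excess by Claim~\ref{clm:degMx}. Flow decomposition gives a dipath in $A'$ from $x$ to some vertex $y$ of negative excess. Such a $y$ must lie in $V_2$, and its excess satisfies $\deg_M(y)>\deg_{M^*}(y)$. If $y\in V_2\setminus X$, then $\deg_M(y)\le 1$ forces $y\in U_2$. This gives a dipath from $V_1\cap B$ to $U_2$, so by \Cref{lem:size1} there is a size-$1$ hitting set, a contradiction. Hence $y\in X\cap V_2$, and we obtain the required dipath from $V_1\cap B$ to $V_2\cap B$.

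It remains to rule out $X\subseteq V_2$ (the case $X\subseteq V_1$ is covered above). Here every $x\in X$ has negative excess. Reverse the flow decomposition: some path ends at $x$ and starts at a vertex $y$ of positive excess. Such a $y$ lies in $V_1\setminus X$, so $\deg_M(y)\le1$ and $\deg_M(y)>\deg_{M^*}(y)$ force $y\in U_1$. This is a dipath from $U_1$ to $V_2\cap B$, again contradicting \Cref{lem:size1}.

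The main obstacle is the bookkeeping in the necessity direction. One must check that flow decomposition of $A'$ yields genuine dipaths of $D(M^*)$; the multiplicities of $M$ are harmless because we only need the underlying arcs. One must also check that each case lands either on the desired path or on a path forbidden by \Cref{lem:size1}.
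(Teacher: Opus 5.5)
Your necessity direction is correct and follows the paper's argument: minimality of $X$, a certificate minimizing $|M^*\setminus M|$, Claims~\ref{clm:degMx} and~\ref{clm:degMv}, flow decomposition of $A(M)\cup A_2$, and Lemma~\ref{lem:size1} to exclude endpoints in $U_2$. Your explicit handling of $X\subseteq V_2$ is also valid. There, the reversed decomposition yields a forbidden dipath from $U_1$ to $V_2\cap B$, which fills in the paper's ``similar argument''.

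The gap is in sufficiency. From ``first and last arcs in $A_1$'' you conclude that $E(P)$ has one more non-matching edge than matching edges. But $A_1$ contains the $V_1\to V_2$ orientation of \emph{every} edge of $E$, including the edges of $M^*$. For dipaths with at least three arcs your conclusion holds: the matching partners of the second and second-to-last vertices lie elsewhere on $P$, so the first and last edges are not in $M^*$. However, $P$ may be the single arc $(x,y)$ with $\{x,y\}\in M^*$. Then $M^*\mathbin{\triangle}E(P)=M^*\setminus\{\{x,y\}\}$ has size $\mu(G)-1$, not $\mu(G)+1$.

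This case does occur under the lemma's hypothesis. Let $G$ be a single edge $\{x,y\}$ with $B=\{x,y\}$. No singleton hits both minimum vertex covers $\{x\}$ and $\{y\}$, the only relevant dipath is the arc $(x,y)$, and your $M$ is empty. In this example no edge \emph{set} of size $\mu(G)+1$ exists at all, so the multiset form of Theorem~\ref{thm:dualformulate} is indispensable; this matches its footnote, since $x$ and $y$ are adjacent.

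The paper treats the one-arc case separately. It takes $M$ to be $M^*$ with the multiplicity of $\{x,y\}$ increased by one. This $M$ has size $\mu(G)+1$ and degree above $1$ only at $x$ and $y$. With that case added, your proof is complete.
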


\begin{proof}
We first show the sufficiency (``if'' part). 
Suppose that $D(M^*)$ contains a dipath $P$ from $x_1 \in V_1 \cap B$ to $x_2 \in V_2 \cap B$. 
If $E(P)$ consists of a single edge $e$, then define $M$ as the multiset of $E$ obtained from $M^*$ by increasing the multiplicity of $e$ by one. 
Otherwise, let $M := M^* \mathbin{\triangle} E(P)$. 
Then, $M$ is an edge (multi)set such that 
$|M| = |M^*| + 1$, and ${\rm deg}_{M}(v) = {\rm deg}_{M^*}(v) \le 1$ for $v \in V \setminus \{x_1, x_2\}$. 
This shows that $X = \{x_1, x_2\}$ is a hitting set of $\minvcfamily$,
because $M$ satisfies the conditions in \Cref{thm:dualformulate} with $k = \mu(G)$.

We next show the necessity (``only if'' part). 
Suppose that there exists a hitting set $X$ of $\minvcfamily$ such that $|X| =2$. 
Suppose also that $X$ contains a vertex $x$ in $V_1 \cap B$; 
a similar argument applies when $X$ contains a vertex in $V_2 \cap B$. 
By \Cref{thm:dualformulate} with $k = \mu(G)$, there exists a multiset $M$ of edges in $E$ such that $|M| = |M^*|+1$ and 
$|M \cap \delta_G(v)| \le 1$ for every 
\red{$v \in V \setminus X$}. 
Among such multisets, we choose $M$ to 
\red{minimize $|M^* \setminus M|$.}
Then, Claims \ref{clm:degMx} and \ref{clm:degMv} hold, 
because $X$ is a minimal hitting set. 

Let $A(M)$ be the multiset of arcs in $A_1$ that is obtained from $M$ by orienting every edge from $V_1$ to $V_2$, and 
let $A'$ be the union of $A(M)$ and $A_2$. 
By the same argument as in the proof for Lemma~\ref{lem:size1}, 
$A'$ contains a dipath $P$ from $x$ to some vertex $y \in V_2$ such that 
${\rm deg}_{M}(y) > {\rm deg}_{M^*}(y)$.
Since the condition in Lemma \ref{lem:size1} is not satisfied, $y$ is not contained in $U_2$, that is, ${\rm deg}_{M^*}(y)\neq 0$. 
Therefore, ${\rm deg}_{M}(y) > {\rm deg}_{M^*}(y) \ge 1$, and hence $y \in X$. 
This shows that $y \in V_2 \cap B$ holds and $P$ is a desired dipath.
\end{proof}

Given an instance of \bminvci, we can compute a maximum matching $M^*$ in $G$ in $|E|^{1 + o(1)}$ time~\cite{Chen25}
(see also~\cite{ChuzhoyK24,HopcroftK73} for other fast combinatorial algorithms), and then 
construct $D(M^*)$ in linear time. 
By checking the conditions in Lemmas~\ref{lem:size1} and~\ref{lem:size2}, 
we can test whether the instance has a solution of size $1$ or $2$ in linear time. 
If there is no solution of size at most $2$, then 
we can conclude that \bminvci has no solution by \Cref{thm:solbound}. 
Therefore, we obtain the following theorem. 

\begin{theorem}
\bminvci can be solved in $|E|^{1 + o(1)}$ time. 
\end{theorem}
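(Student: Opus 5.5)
The theorem follows by assembling the structural results already proved, so the plan is to describe the algorithm and account for its running time. First I compute a maximum matching $M^*$ of $G$ in $|E|^{1+o(1)}$ time by the almost-linear-time algorithm of~\cite{Chen25}. From it I read off the bipartition $(V_1,V_2)$ by a linear-time BFS 2-colouring, the unmatched sets $U_1,U_2$, and the auxiliary digraph $D(M^*)$, which has $|V|$ vertices and at most $|E|+|M^*|\le 2|E|$ arcs and is built in linear time. I may assume $G$ has no isolated vertices, or simply treat them separately. An isolated vertex lies in no minimum vertex cover and can be discarded from $B$ without changing the answer.

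Next I decide the minimum size of a solution by reachability in $D(M^*)$. By \Cref{thm:solbound} with $\ell=0$, every minimal hitting set of $\minvcfamily$ has at most $2$ vertices. The empty set is never a solution, so the minimum solution size is $1$, $2$, or $\infty$. A dipath from $U_1$ to $V_2\cap B$ is detected by a single multi-source BFS from $U_1$. A dipath from $V_1\cap B$ to $U_2$ is detected by one BFS from $U_2$ in the reverse digraph. By \Cref{lem:size1}, these two searches decide whether a size-$1$ solution exists. If none exists, one more multi-source BFS from $V_1\cap B$ tests whether some vertex of $V_2\cap B$ is reachable, and \Cref{lem:size2} says this decides whether a size-$2$ solution exists. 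Each search takes $O(|V|+|E|)$ time.

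Finally I output the answer. It is YES exactly when $t\ge 1$ and a size-$1$ solution exists, or when $t\ge 2$ and a size-$2$ solution exists. Otherwise it is NO: any solution contains a minimal one, which by \Cref{thm:solbound} has size at most $2$, and the two lemmas have ruled out both possible sizes. The total running time is dominated by the matching computation, giving $|E|^{1+o(1)}$.

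The only real subtlety is bookkeeping: \Cref{lem:size2} needs its hypothesis that no size-$1$ solution exists, and monotonicity in $t$ must be handled as above. There is no substantive obstacle, since the combinatorial work was done in \Cref{thm:solbound} and \Cref{lem:size1,lem:size2}.
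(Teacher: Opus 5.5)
Your proposal is correct and follows the paper's proof exactly. You compute $M^*$ in $|E|^{1+o(1)}$ time, build $D(M^*)$, test the reachability conditions of \Cref{lem:size1} and \Cref{lem:size2} by linear-time searches, and invoke \Cref{thm:solbound} with $\ell=0$ to answer NO otherwise; your extra bookkeeping (running the size-$2$ test only when no size-$1$ solution exists, the case split on $t$, and discarding isolated vertices so that the linear-time steps cost $O(|E|)$) is sound and only makes explicit what the paper leaves implicit.
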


\subsection{$\W[1]$-Hardness}
\label{sec:W1hard}

As a complementary result to Corollary~\ref{cor:XPinl}, we prove the following theorem, 
implying that \bvci is unlikely to admit an FPT algorithm parameterized by $\ell$.

\begin{theorem}
\bvci\ is $\W[1]$-hard when parameterized by $\ell = k - \mu(G)$. 
\end{theorem}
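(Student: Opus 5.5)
The plan is a parameterized reduction from \textsc{Multicolored Clique}, parameterized by the clique size $h$. The input is a graph $H$ whose vertex set is partitioned into colour classes $V_1',\dots,V_h'$, and we ask for a clique with one vertex in each class. The reduction must produce an instance with $\ell=k-\mu(G)$ bounded by a function of $h$. I will design everything through the dual formulation in \Cref{thm:dualformulate}. A set $X$ is a solution iff some multiset $M$ with ${\rm deg}_M(v)\le 1$ on $V\setminus X$ has $|M|=k+1=\mu(G)+\ell+1$. So I want $\ell+1$ to count a ``gain'' over a maximum matching $M^*$, and the gain should be realizable only by a good choice of $X$. As in the proof of \Cref{lem:size2}, such gain comes from $M^*$-alternating paths of odd length whose two endpoints both lie in $X$, on opposite sides of the bipartition. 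Each such path adds one edge, and vertex-disjoint paths (disjoint apart from endpoints in $X$) add up.

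\textbf{Construction.} For every vertex $v$ of $H$, create two vertices of $B$: $x_v$ on side $V_1$ and $y_v$ on side $V_2$. Give each of $x_v,y_v$ a private pendant neighbour and put both pendant edges into $M^*$. This keeps $G$ free of $M^*$-augmenting paths, so $\mu(G)=|M^*|$ is easy to read off.

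Next add a \emph{consistency gadget}. For each $v$, add $W:=\binom h2+1$ internally disjoint paths $x_v - c - d - y_v$ with $\{c,d\}\in M^*$.

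Then add an \emph{edge gadget}. For every edge $uv$ of $H$ with $u\in V_i'$, $v\in V_j'$ and $i<j$, add one path $x_u - a_e - b_e - y_v$ with $\{a_e,b_e\}\in M^*$.

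Finally set $t=2h$ and $\ell+1=hW+\binom h2$, i.e.\ $k=\mu(G)+hW+\binom h2-1$. Then $\ell=O(h^2)$, the construction is polynomial, and $G$ is bipartite.

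\textbf{Correctness.} For the forward direction, take a multicoloured clique $\{v_1,\dots,v_h\}$ and let $X=\{x_{v_i},y_{v_i}\}_{i\in[h]}$. Start from $M^*$ and, for every consistency path of a chosen pair and every edge-gadget path of a clique edge, replace the middle matched edge by the two outer edges. The resulting $M$ has gain exactly $hW+\binom h2$. Its degree exceeds $1$ only at vertices of $X$, so \Cref{thm:dualformulate} certifies $X$.

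For the converse, take a solution $X$ that is minimal, with certificate $M$ chosen as in the proof of \Cref{thm:solbound}. Decompose $M\triangle M^*$ via the auxiliary-digraph / flow-decomposition argument of \Cref{lem:size1}. The pendant vertices and the absence of unmatched vertices ensure that every unit of gain is an alternating path between some $x\in X\cap V_1$ and some $y\in X\cap V_2$. Because the gadgets are internally disjoint and their inner vertices have degree $2$, each gain path is exactly one gadget path. Hence the gain is at most $W\cdot p+q$. Here $p$ is the number of indices $v$ with $\{x_v,y_v\}\subseteq X$, and $q$ is the number of edge gadgets with both ends in $X$.

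From $|X|\le 2h$ we get $p\le h$. Since $q\le\binom h2<W$, reaching the target forces $p=h$, and then $X$ consists of exactly $h$ consistent pairs. Therefore $q=\binom h2$. Edge gadgets exist only between distinct colour classes, so $q$ counts edges of $H$ among the $h$ chosen vertices. It can reach $\binom h2$ only if these vertices come from $h$ distinct classes and are pairwise adjacent, which gives a multicoloured clique.

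\textbf{Main obstacle.} The hard step is the converse decomposition: showing rigorously that the gain of any certificate splits into gadget paths with endpoints in $X$. In particular, I must exclude gain from higher multiplicities on edges incident to $X$ and from walks that reuse $X$-vertices in unexpected ways. I would follow the degree-counting of Claims~\ref{clm:degMx} and~\ref{clm:degMv}. First show ${\rm deg}_M\ge{\rm deg}_{M^*}$ everywhere. Then observe that inner gadget vertices have degree $2$ with ${\rm deg}_M\le 1$, so $M$ restricted to each gadget is either $M^*$ or the two outer edges. This reduces the total gain to a simple count per gadget. If pendants cause trouble in this count, I would double each pendant or fold the weight $W$ into $t$.
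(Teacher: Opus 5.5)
Your reduction is sound in design and differs from the paper's. However, the step that forces $p=h$ fails as written. You use $q\le\binom h2<W$, but $q\le\binom h2$ only holds once you already know $X$ is a union of $h$ consistent pairs. For a general $X$ with $|X|\le 2h$ it is false. Take $S_x\subseteq V_1'$ and $S_y\subseteq V_2'$ of size $h$ with $H$ complete between them, and $X=\{x_u : u\in S_x\}\cup\{y_v : v\in S_y\}$. Then $p=0$ and $q=h^2>W$.

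What does hold in general is $q\le |S_x|\,|S_y|\le h^2$, where $S_x=\{u : x_u\in X\}$, $S_y=\{v : y_v\in X\}$ and $|S_x|+|S_y|\le 2h$. The simplest repair is to take $W=h^2+1$. Then for $p\le h-1$ the gain is at most $Wp+h^2<W(p+1)\le hW$, which is below the target. So $p=h$ is forced, and the rest of your argument goes through unchanged. Alternatively, keep $W=\binom h2+1$ and use $q\le\binom{|S_x\cup S_y|}{2}\le\binom{2h-p}{2}$; this is smaller than $(h-p)W+\binom h2$ for every $p<h$ once $h\ge 4$. Also, $\ell=hW+\binom h2-1$ is $\Theta(h^3)$, not $O(h^2)$, which is harmless.

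The step you flag as the main obstacle is easy and needs neither minimality of $X$ nor flow decomposition.
\begin{itemize}
\item No two vertices of $B$ are adjacent, so every certificate $M$ is a set.
\item Each gadget path carries at most two edges of $M$, i.e., at most one more than $M^*$. It may also carry a single outer edge or nothing, contrary to your description, but that is never better than $M^*$.
\item If a path carries two edges and has an endpoint $w\notin X$, then ${\rm deg}_M(w)\le 1$ forces the pendant edge at $w$ out of $M$. This loss is charged to at most one path.
\end{itemize}
Hence the maximum gain is exactly the number of gadget paths with both ends in $X$, namely $Wp+q$, and your forward construction attains it.

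The paper takes a different route. It reduces from \textsc{Clique} and subdivides each edge into a path of length six, with pendants at the original vertices and at the midpoints $z_e$. The interdictor chooses both the clique vertices and the clique edges ($B=V\cup Z$, $t=p+\binom p2$). The argument stays on the primal side: it exhibits an explicit vertex cover disjoint from $X$ and finishes with $(|U|-\frac32)^2\ge(p-\frac32)^2$, which is why it needs $p\ge 4$.

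Your construction selects only vertices and gets the edges for free. It enforces consistency with the $W$-fold gadgets and uses the matching dual, so correctness reduces to a clean count. The price is a cubic parameter instead of the paper's $p^2-p-1$.
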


\begin{proof}
We reduce {\sc Maximum Clique} to \bvci. 
In {\sc Maximum Clique}, we are given a simple graph $G=(V, E)$ and a positive integer $p$, 
and the objective is to determine whether $G$ contains a clique of size $p$ (i.e., a complete subgraph with $p$ vertices). 
This problem is known to be $\W[1]$-hard when parameterized by $p$ (see e.g.,~\cite{DF99}). 

\begin{figure}[tbp]
    \centering
    \includegraphics[width=0.8\linewidth]{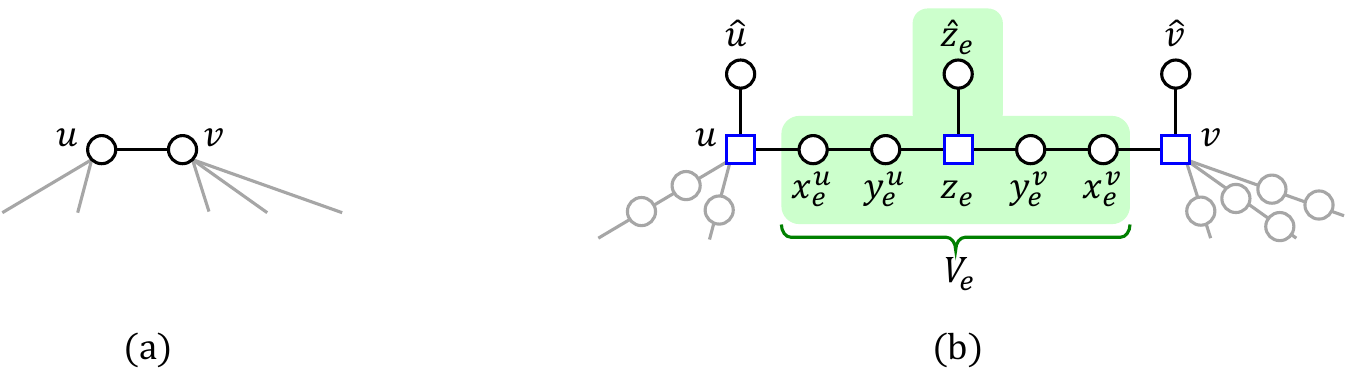} 
    \caption{(a) An edge $\{u, v\}$ in a graph $G$ of \textsc{Maximum Clique}, and (b) the corresponding part in the graph $H$ of \bvci, together with $\hat{u}$ and $\hat{v}$ for $u$ and $v$, respectively. Squares represent vertices in $B$.}
    \label{fig:reduction_W}
\end{figure}

Suppose we are given a graph $G=(V, E)$ and a positive integer $p$ that form an instance of {\sc Maximum Clique}. 
It suffices to give a reduction for the case of $p \ge 4$. 
We construct an instance of \bvci\ as follows (see also \Cref{fig:reduction_W}).
First, replace each edge $e= \{u, v\}$ in $G$ with a path $P_e$ with six edges between $u$ and $v$, and 
let $x^{u}_e, y^{u}_e, z_e, y^{v}_e, x^{v}_e$ be the internal vertices of this path, 
in the order they appear along the path from $u$ to $v$.  
Let $Z = \{z_e \mid e \in E\}$. 
Second, for each vertex $w \in V \cup Z$, 
introduce a new vertex $\hat w$ and a new edge $\{w, \hat w\}$. 
Let $H = (V', E')$ be the obtained graph, that is, 
\[
V' = V \cup \hat V \cup \bigcup_{e \in E} V_e, 
\quad
E' = \{\{w, \hat w\} \mid w \in V \cup Z\} \cup \bigcup_{e \in E} E(P_e),
\]
where $V_e = \{x^{u}_e, y^{u}_e, z_e, y^{v}_e, x^{v}_e, \hat{z_e} \}$ for each $e = \{u, v\} \in E$ and $\hat V = \{\hat w \mid w \in V\}$. 
One can easily see that $H$ is a bipartite graph with $|V'|=2|V|+6|E|$ vertices, which has a perfect matching. 
Therefore, the minimum vertex cover size $\mu(H)$ is equal to $\frac{|V'|}{2} = |V| + 3 |E|$. 
Let 
\[
B = V \cup Z, \quad
t = p+{p \choose 2}, \quad
k = |V| + 3 |E| + p^2 - p - 1.
\]
Note that $k - \mu(H) = p^2 - p - 1$, which is a function of the original parameter $p$. 
Therefore, to show the $\W[1]$-hardness of \bvci, 
it suffices to prove that the obtained instance $(H, t, k, B)$ of \bvci\ is equivalent to the original instance $(G, p)$ of {\sc Maximum Clique}. 
This equivalence is established by the following claims.

\begin{claim}
If $G$ contains a clique of size $p$, then there exists a hitting set $X \subseteq B$ for $\mathcal{C}(H, k)$ such that $|X| \le t$. 
\end{claim}

\begin{proof}
Let $U \subseteq V$ and $F \subseteq E$ be the vertex set and the edge set of a clique of size $p$ in $G$, respectively. 
We prove that $X := U \cup \{z_e \mid e \in F\} \subseteq B$ is a desired set. 
Since $|U| = p$ and $|F| = {p \choose 2}$, it holds that $|X| = t$. 

To prove that $X$ is a hitting set for $\mathcal{C}(H, k)$, 
we show that every vertex cover of $H$ that is disjoint from $X$ has size greater than $k$. 
Let $C$ be a vertex cover of $H$ such that $C \cap X = \emptyset$. 
Then, $C$ satisfies the following properties.  
\begin{itemize}
\item
For each $v \in V$, to cover the edge $\{v, \hat v\}$, $C$ contains at least one of $v$ and $\hat v$. 
\item
For each $e = \{u, v\} \in F$, $C \cap \{u, v, z_e\} = \emptyset$ holds, because $u, v \in U \subseteq X$ and $z_e \in X$. 
Thus,  
in order to cover the edges in $E(P_e) \cup \{ \{z_e, \hat {z_e}\} \}$, 
$C$ has to contain all the vertices in $V_e \setminus \{z_e\}$, implying that $|C \cap V_e| = 5$. 
\item
For each $e = \{u, v\} \in E \setminus F$, $C \cap V_e$ covers the edges 
$\{x^u_e, y^u_e\}, \{y^v_e, x^v_e\}$, and $\{z_e, \hat {z_e}\}$. 
Thus, 
$C$ contains at least three vertices in $V_e$. 
\end{itemize}
Therefore, 
\begin{align*}
|C| &= \sum_{v \in V} | C \cap \{v, \hat v\}| + \sum_{e \in F} |C \cap V_e| + \sum_{e \in E \setminus F} |C \cap V_e| \\ 
     &\ge |V| + 5|F| + 3(|E| - |F|) 
      = |V| + 3|E| + p^2 - p 
      >  k. 
\end{align*}
This shows that 
$X$ is a hitting set for $\mathcal{C}(H, k)$, which completes the proof. 
\end{proof}

\begin{claim}
If there exists a hitting set $X \subseteq B$ for $\mathcal{C}(H, k)$ such that 
$|X| \le t$, 
then $G$ contains a clique of size $p$.  
\end{claim}

\begin{proof}
Let $X \subseteq B$ be a hitting set for $\mathcal{C}(H, k)$ such that $|X| \le t$. 
Let $U = X \cap V$ and $F = \{e \in E \mid z_e \in X\}$. 
Our aim is to show that $U$ and $F$ are the vertex set and the edge set of a clique of size $p$ in $G$, respectively. 

Let $\mathcal{P}$ be the set of all pairs $(v, e)$ consisting of a vertex $v \in V$ and an edge $e \in E$ such that 
$e$ is incident to $v$ in $G$; that is, $\mathcal{P} = \{(v, e) \mid v \in V,\ e \in \delta_G(v)\}$.  
For each $(v, e) \in \mathcal{P}$, define $C_{(v, e)} \subseteq V'$ as 
\[
C_{(v, e)} = 
\begin{cases}
\{x^v_e\} & \mbox{if $e \in E \setminus F$,} \\
\{y^v_e\} & \mbox{if $v \in V \setminus U$ and $e \in F$,} \\
\{x^v_e, y^v_e\} & \mbox{if $v \in U$ and $e \in F$.} 
\end{cases}
\]
Then, define a vertex set $C \subseteq V'$ of $H$ as 
\[
C = \{\hat  w \mid w \in X\} \cup (B \setminus X) \cup \bigcup_{(v, e) \in \mathcal{P}} C_{(v, e)}. 
\]
One can see that $C$ is a vertex cover of $H$ such that $C \cap X = \emptyset$ and 
\begin{align}
|C| &= |X| + (|B| - |X|) + |\mathcal{P}| + |\{(v, e) \in \mathcal{P} \mid v \in U,\ e \in F\}| \notag \\
    &= |V| + 3|E| + |\{(v, e) \in \mathcal{P} \mid v \in U,\ e \in F\}|, \label{eq:w1hard01} 
\end{align} 
where we note that $|B| = |V| + |E|$ and $|\mathcal{P}| = 2|E|$. 
Since $X$ is a hitting set for $\mathcal{C}(H, k)$, we obtain
\begin{equation}
|C| \ge k+1 = |V| + 3|E| + p^2 - p. \label{eq:w1hard02}  
\end{equation}
Hence,  
\begin{equation}
|\{(v, e) \in \mathcal{P} \mid v \in U,\ e \in F\}| \ge p^2 - p \label{eq:w1hard03}
\end{equation}
by (\ref{eq:w1hard01}) and (\ref{eq:w1hard02}). 
This implies that $|U| \ge 1$. 

Observe that each edge in $F$ is contained in at most two pairs in 
$\{(v, e) \in \mathcal{P} \mid v \in U,\ e \in F\}$, and so 
$|\{(v, e) \in \mathcal{P} \mid v \in U,\ e \in F\}| \le 2 |F|$. 
This, together with (\ref{eq:w1hard03}), shows that 
$|F| \ge \frac{p^2 - p}{2}$. Therefore, we obtain 
\begin{equation}
|U| = |X| - |F| \le t - \frac{p^2 - p}{2} = p. \label{eq:w1hard06}
\end{equation}

Let $F_1$ (resp.~$F_2$) be the 
set of edges $e \in F$ such that exactly one (resp.~two) end vertices of $e$ belong to $U$. 
Since $|F_2| \le {|U| \choose 2}$ holds, we obtain
\begin{equation}
|\{(v, e) \in \mathcal{P} \mid v \in U,\ e \in F\}| = |F_1| + 2|F_2| \le |F| + |F_2| \le |F| + {|U| \choose 2}. \label{eq:w1hard04}
\end{equation} 
By (\ref{eq:w1hard03}) and (\ref{eq:w1hard04}), it holds that
\begin{equation}
p^2 - p
\le |F| + {|U| \choose 2}
\le \left( p + {p \choose 2} - |U|\right) + {|U| \choose 2} 
\label{eq:w1hard05}
\end{equation} 
since $|F| = |X|-|U| \leq t - |U| \leq p + \binom{p}{2} - |U|$.
Rearranging this inequality, we obtain $\frac{1}{2} (|U|^2 - 3|U|) \ge \frac{1}{2} (p^2 -3p)$, which is equivalent to
$\left( |U| - \frac{3}{2} \right)^2 \ge \left( p - \frac{3}{2} \right)^2$. 
Since we have assumed that $p \ge 4$, this implies that $|U| \ge p$.

This, together with (\ref{eq:w1hard06}), shows that $|U| = p$ and all of the above inequalities hold with equality.
Equality in (\ref{eq:w1hard05}) implies $|F| = \binom{p}{2}$, and by equality in (\ref{eq:w1hard04}), 
every edge in $F$ has both of its end vertices in $U$. 
Therefore, $U$ and $F$ form a clique of size $p$ in $G$.
\end{proof}

These claims show the validity of the reduction, and hence \bvci\ is $\W[1]$-hard. 
\end{proof}

\section{Parameterized by Vertex Cover Size}

In this section, we show that \bvci is fixed-parameter tractable when parameterized by the upper bound $k$ on the vertex cover size. 
More strongly, we show that \bvci admits a polynomial kernel.

\begin{theorem}
    We can construct a kernel \red{with $O(k^2)$ vertices} for \bvci in polynomial time. 
\end{theorem}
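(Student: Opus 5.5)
Our plan is to adapt the classical Buss kernel for \textsc{Vertex Cover} to the interdiction setting, using the dual formulation of \Cref{thm:dualformulate} to justify each reduction rule. First we dispose of trivial cases. If $\mu(G) > k$, then $\mathcal{C}(G,k)=\emptyset$ and $X=\emptyset$ is a solution, so we output a constant-size yes-instance. Isolated vertices play no role in any minimal vertex cover, and $X$ may as well avoid them, so we delete them (they appear in no certificate).

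The main rule concerns high-degree vertices. Let $v$ have degree at least $k+1$. Every vertex cover of size at most $k$ contains $v$. We split into two cases.
\begin{itemize}
\item If $v \in B$ and $t \ge 1$, then $\{v\}$ is already a solution (by \Cref{thm:dualformulate}, $k+1$ edges at $v$ give a certificate), so we answer yes.
\item Otherwise, every small cover contains $v$, so hitting $v$ is impossible through $X$. We delete $v$ and decrease $k$ by one. This is valid because $C \mapsto C\setminus\{v\}$ is a bijection between $\mathcal{C}(G,k)$ and $\mathcal{C}(G-v,k-1)$, and $X\subseteq B\setminus\{v\}$ hits one family iff it hits the other. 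Bipartiteness is preserved.
\end{itemize}
After exhaustive application, every vertex has degree at most $k$. If $\mu(G)\le k$ (otherwise we are in the trivial case), a maximum matching has at most $k$ edges and a minimum vertex cover $C_0$ has at most $k$ vertices. Since every edge touches $C_0$, the graph has at most $k^2$ edges and at most $k + k^2$ non-isolated vertices, which gives the $O(k^2)$ bound.

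The main obstacle is the preprocessing order and the interaction between the rules: deleting $v$ lowers $k$ and can create new isolated vertices or new high-degree vertices relative to the new $k$. We therefore iterate until no rule applies. Each application removes a vertex, so this takes polynomial time.

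We also check the parameter bookkeeping. The value $t$ can be capped at $|B|$. The new $k$ never increases. If $k$ becomes negative while $G$ still has edges, then $\mathcal{C}=\emptyset$ and we answer yes.

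Finally, the high-degree case with $t=0$ needs care. If $t=0$, the answer is yes iff $\mathcal{C}(G,k)=\emptyset$, i.e., iff $\mu(G)>k$. This is decidable in polynomial time, so we resolve it directly and output a constant-size equivalent instance.
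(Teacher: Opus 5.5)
Your proposal is correct and follows essentially the same route as the paper. The paper also disposes of the case $\mu(G)>k$, returns $\{v\}$ for a vertex $v\in B$ of degree at least $k+1$, deletes such a vertex $v\notin B$ while decrementing $k$, removes isolated vertices, and bounds the number of edges by $k^2$ via a vertex cover of size at most $k$ with maximum degree at most $k$; the only differences are that the paper keeps the invariant $\mu(G-v)=\mu(G)-1\le k-1$ rather than re-checking $\mu$ at the end, and your extra bookkeeping for $t=0$ and negative $k$ is harmless.
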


\begin{proof}
Suppose that we are given an instance $(G, t, k, B)$ of \bvci. 
We see that there exists a solution of size zero (i.e., $X = \emptyset$ is a solution) 
if and only if $\mu(G) > k$. 
Since this condition can be checked in polynomial time, 
we may assume that $t \ge 1$ and $\mu(G) \le k$. 

We consider the following two reduction procedures.
(R1) If there exists $v \in B$ \red{of} degree at least $k+1$, then return $X = \{v\}$ as a solution of size one.
(R2) If there exists $v \in V \setminus B$ \red{of} degree at least $k+1$, then remove $v$ from $G$ and decrease $k$ by one; that is, consider $(G - v, t, k - 1, B)$.

If there exists $v \in V$ \red{of} degree at least $k+1$, then every vertex cover of size at most $k$ must contain $v$, which shows the validity of 
the reduction procedures above. 
Note that if we apply (R2), then $\mu(G-v) = \mu(G) - 1 \le k-1$ holds; that is, the assumption still holds. 

By applying these procedures exhaustively, we obtain either a solution of size one or an equivalent instance $(G', t, k', B)$ such that $k' \le k$, $\mu(G') \le k'$, and every vertex $v$ in $G'$ has degree at most $k'$.
Since each edge in $G'$ is incident to a vertex in a minimum vertex cover, we have $|E(G')| \le (k')^2 \le k^2$.
We then remove all isolated vertices, which obviously does not affect feasibility. 
Since the resulting graph has at most $2|E(G')| \le 2k^2$ vertices, 
we obtain a kernel \red{with $O(k^2)$ vertices}.
\end{proof}

This theorem implies the fixed-parameter tractability of \bvci as follows. 

\begin{corollary}\label{cor:fptk}
    \bvci\ is fixed-parameter tractable when parameterized by $k$.
\end{corollary}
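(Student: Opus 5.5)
The plan is to combine the polynomial kernel from the preceding theorem with a brute-force search on the kernel, so the running time has the form $f(k)\cdot |V|^{O(1)}$.

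First, run the kernelization in polynomial time. Either it returns a solution directly, namely $X=\emptyset$ when $\mu(G)>k$, or $X=\{v\}$ via (R1). Otherwise it yields an equivalent instance $(G',t,k',B')$ with $k'\le k$, at most $2k^2$ vertices, and $B'$ the restriction of $B$ to the surviving vertices. Isolated vertices of $B$ are removed in the kernel, but no such vertex can be needed in a minimal solution: adding an isolated vertex to a vertex cover of size less than $k$ is exactly what it must hit, and in any case it can be dropped from any hitting set. Concretely, removing an isolated vertex from a vertex cover keeps it a cover and does not increase its size. Hence if $X$ hits $\mathcal{C}(G,k)$, then $X$ minus the isolated vertices also does. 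Thus restricting $B$ to non-isolated vertices is safe, and this is already part of the kernel's correctness.

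Second, on the kernel, enumerate all subsets $X\subseteq B'$ with $|X|\le \min\{t,|V(G')|\}$. There are at most $2^{2k^2}$ such subsets. For each one, test whether it hits $\mathcal{C}(G',k')$ in polynomial time using the weighted bipartite vertex cover check from the proof of \Cref{prop:np}: give weight $k'+1$ to the vertices of $X$ and weight $1$ to the others, and check that the minimum weight exceeds $k'$. Accept if some $X$ passes. The total time is $2^{O(k^2)}\cdot k^{O(1)}$ plus the polynomial kernelization time, which proves fixed-parameter tractability.

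The only point requiring care is the equivalence of the reduced instance, in particular that (R2) preserves the answer with the same $t$ and $B$. This is already established in the kernel theorem, so no real obstacle remains. Alternatively, one could instead branch on the vertex covers of size at most $k$, of which there are at most $2^k$ minimal ones up to supersets, but the kernel route is simpler.
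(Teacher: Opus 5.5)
Your proposal is correct and takes the same route as the paper, which derives the corollary directly from the $O(k^2)$-vertex kernel: you brute-force over subsets of $B$ in the kernel, using the polynomial-time weighted vertex cover test from \Cref{prop:np}, which gives $2^{O(k^2)}\cdot k^{O(1)}$ time after polynomial preprocessing. Your first sentence about isolated vertices is garbled, but the ``Concretely'' argument that follows it is the right justification.
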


\red{We can also show the fixed-parameter tractability parameterized by $|V \setminus B|$ as follows.} 

\begin{corollary}
    \red{\bvci\ is fixed-parameter tractable when parameterized by $|V \setminus B|$.}
\end{corollary}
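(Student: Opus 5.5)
The plan is to reduce to \Cref{cor:fptk} by showing that we may assume $k$ is bounded by a function of $p := |V\setminus B|$. Otherwise we solve the instance directly.

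First, if $\mu(G) > k$, then $X=\emptyset$ is a solution and we are done. So assume $\mu(G)\le k$.

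Next, consider the set $B$. Suppose $B$ contains two adjacent vertices $u,v$. Then every vertex cover contains $u$ or $v$, so $X=\{u,v\}$ hits every vertex cover. Hence the answer is yes whenever $t\ge 2$. The case $t\le 1$ is decided by testing each single vertex $x\in B$ with the weighted vertex cover check from \Cref{prop:np}, which is polynomial.

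A cleaner uniform treatment is to always first test all candidate sets $X\subseteq B$ with $|X|\le 2$ by that check. If none works, we may assume $t\ge 3$ and that $B$ is an independent set. The key point is that then every edge has an endpoint in $V\setminus B$, so $V\setminus B$ is a vertex cover and $\mu(G)\le p$.

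With $B$ independent, I would split on the size of $k$.

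If $k\ge |V\setminus B| = p$, then $V\setminus B$ itself is a vertex cover of size at most $k$ that is disjoint from $B$. No $X\subseteq B$ can hit it, so the answer is no.

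Otherwise $k<p$, so $k$ is bounded by the parameter, and \Cref{cor:fptk} gives an algorithm running in $f(k)\cdot|V|^{O(1)}\le f(p)\cdot|V|^{O(1)}$ time. Alternatively, apply the $O(k^2)$ kernel and then brute force.

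The main obstacle is justifying the reduction to independent $B$. With adjacent $u,v\in B$ the answer is yes only when $t\ge2$, so small $t$ must be handled separately; the exhaustive check of all $|X|\le 2$ above does this in polynomial time. A further subtlety is that the independent-case argument really only needs that no solution of size at most $2$ exists. If $B$ has an edge, a size-$2$ solution exists and was already found. So after the check fails, $B$ is automatically independent and the case analysis applies verbatim.
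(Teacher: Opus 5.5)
Your proof is correct and follows essentially the same route as the paper. You handle small solutions by brute force, note that $B$ is then independent so $V\setminus B$ is a vertex cover, answer no if $|V\setminus B|\le k$, and otherwise invoke \Cref{cor:fptk} since $k<|V\setminus B|$. The differences are cosmetic: your preliminary $\mu(G)>k$ test is unnecessary, and in the uniform brute-force step you should test only sets with $|X|\le\min\{2,t\}$, so that a size-two set is not accepted when $t=1$.
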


\begin{proof}
If $t \le 1$, then 
for every $X \subseteq B$ with $|X| \le 1$, 
we test whether $X$ is a solution to the given instance of \bvci\ or not, 
which can be done in $|V|^{O(1)}$ time as in the proof for Proposition \ref{prop:np}. 
Therefore, it suffices to consider the case where $t \ge 2$.

If $V \setminus B$ is not a vertex cover of $G$, then $B$ contains a pair of adjacent vertices $u$ and $v$,
which means that $X = \{u,v\}$ is a solution of size two.

If $V \setminus B$ is a vertex cover of size at most $k$, 
then no set $X \subseteq B$ hits $V \setminus B$, 
and hence we can conclude that the instance of $\bvci$ has no solution.

We are left with the case where $|V \setminus B| \ge k+1$.
In this case,  Corollary~\ref{cor:fptk} implies that \bvci is fixed-parameter tractable when parameterized by $|V \setminus B|$.
\end{proof}

Note that \vci, without bipartiteness, is \coNP-complete when $B = V$ \cite{DBLP:journals/tcs/GruneW26}.


\bibliography{biblio}

@inproceedings{DBLP:conf/ipco/GruneW25,
  author       = {Christoph Gr{\"{u}}ne and
                  Lasse Wulf},
  editor       = {Nicole Megow and
                  Amitabh Basu},
  title        = {Completeness in the Polynomial Hierarchy for Many Natural Problems
                  in Bilevel and Robust Optimization},
  booktitle    = {Integer Programming and Combinatorial Optimization - 26th International
                  Conference, {IPCO} 2025, Baltimore, MD, USA, June 11--13, 2025, Proceedings},
  series       = {Lecture Notes in Computer Science},
  volume       = {15620},
  pages        = {256--269},
  publisher    = {Springer},
  year         = {2025},
  doi          = {10.1007/978-3-031-93112-3_19}
}

@article{DBLP:journals/dam/BazganTT11,
  author       = {Cristina Bazgan and
                  Sonia Toubaline and
                  Zsolt Tuza},
  title        = {The most vital nodes with respect to independent set and vertex cover},
  journal      = {Discret. Appl. Math.},
  volume       = {159},
  number       = {17},
  pages        = {1933--1946},
  year         = {2011},
  doi          = {10.1016/J.DAM.2011.06.023}
}

@inproceedings{DBLP:conf/coco/Karp72,
  author       = {Richard M. Karp},
  editor       = {Raymond E. Miller and
                  James W. Thatcher},
  title        = {Reducibility Among Combinatorial Problems},
  booktitle    = {Proceedings of a Symposium on the Complexity of Computer Computations,
                  held March 20--22, 1972, at the {IBM} Thomas J. Watson Research Center,
                  Yorktown Heights, New York, {USA}},
  series       = {The {IBM} Research Symposia Series},
  pages        = {85--103},
  publisher    = {Plenum Press, New York},
  year         = {1972},
  doi          = {10.1007/978-1-4684-2001-2_9}
}

@book{DF99,
  author    = {Rodney G. Downey and Michael R. Fellows},
  title     = {Parameterized Complexity},
  publisher = {Springer},
  year      = {1999},
  series    = {Monographs in Computer Science},
  address   = {New York}
}

@book{Sch03,
  title={Combinatorial Optimization: Polyhedra and Efficiency},
  author={Schrijver, Alexander},
  year={2003},
  publisher={Springer}
}

@article{Chen25,
  author       = {Li Chen and
                  Rasmus Kyng and
                  Yang P. Liu and
                  Richard Peng and
                  Maximilian Probst Gutenberg and
                  Sushant Sachdeva},
  title        = {Maximum Flow and Minimum-Cost Flow in Almost-Linear Time},
  journal      = {J. {ACM}},
  volume       = {72},
  number       = {3},
  pages        = {19:1--19:103},
  year         = {2025},
  url          = {https://doi.org/10.1145/3728631},
  doi          = {10.1145/3728631},
}

@inproceedings{ChuzhoyK24,
  author       = {Julia Chuzhoy and
                  Sanjeev Khanna},
  editor       = {David P. Woodruff},
  title        = {A Faster Combinatorial Algorithm for Maximum Bipartite Matching},
  booktitle    = {Proceedings of the 2024 {ACM-SIAM} Symposium on Discrete Algorithms,
                  {SODA} 2024, Alexandria, VA, USA, January 7--10, 2024},
  pages        = {2185--2235},
  publisher    = {{SIAM}},
  year         = {2024},
  url          = {https://doi.org/10.1137/1.9781611977912.79},
  doi          = {10.1137/1.9781611977912.79},
}

@article{HopcroftK73,
  author       = {John E. Hopcroft and
                  Richard M. Karp},
  title        = {An n\({}^{\mbox{5/2}}\) Algorithm for Maximum Matchings in Bipartite
                  Graphs},
  journal      = {{SIAM} J. Comput.},
  volume       = {2},
  number       = {4},
  pages        = {225--231},
  year         = {1973},
  url          = {https://doi.org/10.1137/0202019},
  doi          = {10.1137/0202019},
}

@inproceedings{DinurS14,
  author       = {Irit Dinur and
                  David Steurer},
  editor       = {David B. Shmoys},
  title        = {Analytical approach to parallel repetition},
  booktitle    = {Symposium on Theory of Computing, {STOC} 2014, New York, NY, USA,
                  May 31--June 03, 2014},
  pages        = {624--633},
  publisher    = {{ACM}},
  year         = {2014},
  url          = {https://doi.org/10.1145/2591796.2591884},
  doi          = {10.1145/2591796.2591884},
  bibsource    = {dblp computer science bibliography, https://dblp.org}
}

@article{DBLP:journals/eor/SmithS20,
  author       = {J. Cole Smith and
                  Yongjia Song},
  title        = {A survey of network interdiction models and algorithms},
  journal      = {Eur. J. Oper. Res.},
  volume       = {283},
  number       = {3},
  pages        = {797--811},
  year         = {2020},
  doi          = {10.1016/J.EJOR.2019.06.024}
}

@article{DBLP:journals/mor/ChestnutZ17,
  author       = {Stephen R. Chestnut and
                  Rico Zenklusen},
  title        = {Interdicting Structured Combinatorial Optimization Problems with $\{0,
                  1\}$-Objectives},
  journal      = {Math. Oper. Res.},
  volume       = {42},
  number       = {1},
  pages        = {144--166},
  year         = {2017},
  doi          = {10.1287/MOOR.2016.0798}
}

@article{DBLP:journals/csr/AusielloBFLR26,
  author       = {Giorgio Ausiello and
                  Lorenzo Balzotti and
                  Paolo Giulio Franciosa and
                  Isabella Lari and
                  Andrea Ribichini},
  title        = {Interdiction in network maximum flow and related problems: {A} survey},
  journal      = {Comput. Sci. Rev.},
  volume       = {60},
  pages        = {100867},
  year         = {2026},
  doi          = {10.1016/J.COSREV.2025.100867}
}

@inproceedings{DBLP:conf/icalp/ChenWZ22,
  author       = {Lin Chen and
                  Xiaoyu Wu and
                  Guochuan Zhang},
  editor       = {Mikolaj Bojanczyk and
                  Emanuela Merelli and
                  David P. Woodruff},
  title        = {Approximation Algorithms for Interdiction Problem with Packing Constraints},
  booktitle    = {49th International Colloquium on Automata, Languages, and Programming,
                  {ICALP} 2022, Paris, France, July 4--8, 2022},
  series       = {LIPIcs},
  pages        = {39:1--39:19},
  publisher    = {Schloss Dagstuhl - Leibniz-Zentrum f{\"{u}}r Informatik},
  year         = {2022},
  doi          = {10.4230/LIPICS.ICALP.2022.39}
}

@article{DBLP:journals/algorithmica/DvorakK18,
  author       = {Pavel Dvor{\'{a}}k and
                  Dusan Knop},
  title        = {Parameterized Complexity of Length-bounded Cuts and Multicuts},
  journal      = {Algorithmica},
  volume       = {80},
  number       = {12},
  pages        = {3597--3617},
  year         = {2018},
  doi          = {10.1007/S00453-018-0408-7}
}

@article{DBLP:journals/networks/BazganFNNS19,
  author       = {Cristina Bazgan and
                  Till Fluschnik and
                  Andr{\'{e}} Nichterlein and
                  Rolf Niedermeier and
                  Maximilian Stahlberg},
  title        = {A more fine-grained complexity analysis of finding the most vital
                  edges for undirected shortest paths},
  journal      = {Networks},
  volume       = {73},
  number       = {1},
  pages        = {23--37},
  year         = {2019},
  doi          = {10.1002/NET.21832}
}

@article{DBLP:journals/tcs/GruneW26,
  author       = {Christoph Gr{\"{u}}ne and
                  Lasse Wulf},
  title        = {The complexity of blocking all solutions},
  journal      = {Theor. Comput. Sci.},
  volume       = {1069},
  pages        = {115820},
  year         = {2026},
  doi          = {10.1016/J.TCS.2026.115820}
}

@article{DBLP:journals/jda/BentzCPRW12,
  author       = {C{\'{e}}dric Bentz and
                  Marie{-}Christine Costa and
                  Christophe Picouleau and
                  Bernard Ries and
                  Dominique de Werra},
  title        = {$d$-{T}ransversals of stable sets and vertex covers in weighted bipartite
                  graphs},
  journal      = {J. Discrete Algorithms},
  volume       = {17},
  pages        = {95--102},
  year         = {2012},
  doi          = {10.1016/J.JDA.2012.06.002}
}

@article{DBLP:journals/jco/CostaWP11,
  author       = {Marie{-}Christine Costa and
                  Dominique de Werra and
                  Christophe Picouleau},
  title        = {Minimum $d$-blockers and $d$-transversals in graphs},
  journal      = {J. Comb. Optim.},
  volume       = {22},
  number       = {4},
  pages        = {857--872},
  year         = {2011},
  doi          = {10.1007/S10878-010-9334-6}
}

\appendix

\section{$\Sigma_2^p$-Completeness for the Minimum Vertex Cover Interdiction Problem in General Graphs}
\label{sec:sigma2pmin}

In this appendix, we study the following problem.

\begin{tcolorbox}[colback=white,sharp corners]
\begin{description}
\item[Problem:] \minvci
\item[Input:] A graph $G=(V, E)$, a non-negative integer $t$ and a vertex subset $B\subseteq V$
\item[Question:] Does there exist a set $X \subseteq B$ such that $|X|\leq t$ and $X\cap C\neq \emptyset$ for every minimum vertex cover $C$ of $G$?
\end{description}
\end{tcolorbox}

The difference from \bminvci is that we do not have a restriction that $G$ is bipartite.

For \minvci, we prove the following theorem.

\begin{theorem}
    \label{thm:minvci-sigma2p}
    \minvci is $\Sigma_2^p$-complete.
\end{theorem}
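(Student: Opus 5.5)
Membership in $\Sigma_2^p$ is immediate, because the question has the form $\exists X\,\forall C$. A certificate is $X\subseteq B$ with $|X|\le t$. For a fixed $X$, checking that $X$ hits every minimum vertex cover is a $\coNP$ question: a counterexample is a vertex cover $C$ with $C\cap X=\emptyset$ whose size equals the minimum size. Because we do not know $\mu$ in advance, we phrase it so that a single universal quantifier suffices: for all $C$ with $C\cap X=\emptyset$ and all $C'$, if $C$ and $C'$ are vertex covers then $|C'|<|C|$, or $C$ is not a vertex cover. Both $C$ and $C'$ are placed under the universal quantifier, so the whole condition stays within $\Pi_1^p$ and the problem lies in $\Sigma_2^p$.

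For hardness I would reduce from \vci in general graphs, which is $\Sigma_2^p$-complete by Gr\"une and Wulf~\cite{DBLP:conf/ipco/GruneW25}. The goal is to turn the bound ``size at most $k$'' into ``minimum size''. Given $(G,t,k,B)$ with $G=(V,E)$ and $n=|V|$, I would form $G^\ast$ by adding a gadget that absorbs the slack between $k$ and $\mu$. Any vertex cover $C$ of $G$ with $|C|\le k$ should extend to a minimum vertex cover of $G^\ast$ that agrees with $C$ on $V$. Every minimum vertex cover of $G^\ast$ should restrict to a vertex cover of $G$ of size at most $k$.

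A natural construction is to add a clique $K$ on $n-k$ fresh vertices and join every vertex of $K$ to every vertex of $V$, assuming WLOG $k<n$. A vertex cover of $G^\ast$ omits at most one vertex of $K$. If it omits a vertex of $K$, it must contain all of $V$, so its size is at least $n+(n-k-1)$. If it contains all of $K$, its size is $(n-k)+|C\cap V|$, where $C\cap V$ must be a vertex cover of $G$. Comparing the two cases, covers of the second type beat the first precisely when $|C\cap V|\le 2n-2k-1$, and this threshold does not match $k$.

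The delicate step is to tune the gadget so that the minimum of $G^\ast$ is exactly $k+c$. The minimum vertex covers must then be exactly the extensions of vertex covers of $G$ of size at most $k$. The problem is that an extension of a size-$j<k$ cover has size $j+c$, which is smaller than $k+c$, so it is not merely one of the minima but strictly better. I would therefore use padding vertices. I would add $k$ pendant-style ``slack'' vertices forming a structure that lets a cover of $G$-part size $j$ be compensated by $k-j$ additional forced vertices. One option is to join each vertex $v\in V$ to a private new vertex $v'$, and to add a gadget making exactly $k$ of the $2n$ vertices in $V\cup V'$ free. This is the main obstacle.

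A cleaner alternative, which I would try first, is to reduce directly from $\exists\forall$-SAT following Gr\"une--Wulf's framework. We use the standard Karp vertex-cover gadget, which consists of variable edges $\{x,\bar x\}$ and clause triangles with $k=n+2m$. In that gadget every vertex cover of size at most $k$ is automatically minimum when the formula is satisfiable for the relevant assignment. We then ensure that $\mu=k$ by padding with an always-satisfiable dummy clause structure, and we put the interdiction set $B$ on the literal vertices of $X$-variables. The work then reduces to checking that every minimum cover corresponds to a satisfying $Y$-completion.
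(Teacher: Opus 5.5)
Your membership argument puts $C'$ under the wrong quantifier. With $C'$ universally quantified, choosing $C'=C$ falsifies $|C'|<|C|$ for every vertex cover $C$ disjoint from $X$. So your formula says that no vertex cover avoids $X$, i.e., that $X$ hits \emph{every} vertex cover. That is a different, trivial condition.

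The underlying claim that the fixed-$X$ check is a coNP question is also unjustified. A cover $C$ avoiding $X$ is a counterexample only if it is minimum, and minimality cannot be checked in polynomial time. The honest fixed-$X$ condition, ``every cover avoiding $X$ admits a strictly smaller cover,'' is $\forall\exists$. The repair, which is what the paper does, is to move the size witness into the existential block. Guess $X$ together with an arbitrary vertex cover $C^*$, and require for all $C$: if $C$ is a vertex cover with $|C|\le|C^*|$, then $C\cap X\neq\emptyset$. If $C^*$ is minimum this is exactly the required condition, and a non-minimum $C^*$ only makes it harder to satisfy.

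For hardness, the paper builds no new gadget. It observes that the Gr\"une--Wulf reduction from combinatorial interdiction 3-SAT to \vci already outputs instances with $k$ equal to the minimum vertex cover size, so the same map is a reduction to \minvci. Your second route contains the relevant observation: in Karp's gadget every cover has size at least $n+2m=k$. But the sketch fails as written, in two ways.
\begin{itemize}
\item If $B$ contains both literal vertices $x,\bar x$ of some variable, then $X=\{x,\bar x\}$ hits every vertex cover, because the edge $\{x,\bar x\}$ must be covered. So every instance with $t\ge 2$ is a trivial yes-instance. Some mechanism is needed to stop the interdictor from blocking both ends of an edge, and that is precisely the nontrivial part of Gr\"une--Wulf's construction.
\item $\mu=k$ holds iff the whole formula $\varphi(X,Y)$ is satisfiable, and appending satisfiable dummy clauses cannot repair an unsatisfiable $\varphi$. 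When $\mu>k$, each minimum cover takes both ends of some variable edge or all three vertices of some triangle, so it no longer encodes an assignment. You would need something like a fresh literal $w$ added to every clause that the interdictor is forced to block, and that again requires controlling $B$ and $t$.
\end{itemize}
Your first route, a generic gadget from \vci to \minvci, is left open, and the clique join you analyze indeed gives no threshold at $k$. So the hardness half is a plan with its key steps missing. The shortest correct route is the paper's: take Gr\"une--Wulf's reduction and check that its output satisfies $k=\mu(G)$.
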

\begin{proof}
    The proof consists of two parts: (1) the membership in $\Sigma_2^p$ and (2) the $\Sigma_2^p$-hardness.

We first prove (1)~the membership in $\Sigma_2^p$.
To do so, we design a $\Sigma_2^p$-algorithm for \minvci.
The algorithm is based on the observation that an instance $(G, t, B)$ is a yes-instance if and only if
there exist $X \subseteq B$ and a vertex cover $C^*$ of $G$ such that 
$|X| \le t$ and $X \cap C \neq \emptyset$ for any vertex cover $C$ with $|C| \le |C^*|$. 
Imagine that we act as an $\exists$-player, and the adversary acts as a $\forall$-player.
We take the first turn, and the adversary takes the second turn.

We first guess a set $X \subseteq B$ that is supposed to be a solution to \minvci and also guess a (minimum) vertex cover $C^*$ of $G$.
Then, the adversary needs to verify that
\begin{enumerate}
    \item $|X|\leq t$, and
    \item $C^*$ is a vertex cover of $G$.
    \item $X\cap C \neq \emptyset$ for every minimum vertex cover $C$ of $G$.
\end{enumerate}
The first and second conditions are easy to verify in polynomial time.
For the third condition, the adversary selects an arbitrary vertex cover $C$ of $G$.
If $C$ is a minimum vertex cover, then $|C| \leq |C^*|$.
When this inequality holds, we check $X\cap C\neq \emptyset$.
After the adversarial choice of $C$, it only takes polynomial time.
We note that we do not have to guess a \emph{minimum} vertex cover $C^*$. 
However, if we do not guess a minimum vertex cover, the situation becomes more severe for us since the adversary can refute more easily.
This finishes the proof of (1) the membership in $\Sigma_2^p$.

We then prove (2)~the $\Sigma_2^p$-hardness.
Our proof indeed uses the same reduction by Gr\"une and Wulf~\cite{DBLP:conf/ipco/GruneW25} for the $\Sigma_2^p$-completeness of \vci, defined as follows.
\begin{tcolorbox}[colback=white,sharp corners]
\begin{description}
\item[Problem:] \vci
\item[Input:] A graph $G=(V, E)$, non-negative integers $t, k$ and a vertex subset $B\subseteq V$
\item[Question:] Does there exist a set $X \subseteq B$ such that $|X|\leq t$ and $X\cap C\neq \emptyset$ for every vertex cover $C$ of $G$ of size at most $k$?
\end{description}
\end{tcolorbox}

Their reduction used the combinatorial interdiction $3$-SAT\@, which is $\Sigma_2^p$-complete.
\red{Namely, the reduction created an instance \vci from an instance of the combinatorial interdiction $3$-SAT\@. }
It can be observed that a given instance of the combinatorial interdiction $3$-SAT is a yes-instance if and only if the created instance by the reduction of \vci is a yes-instance \red{with $k$ equal to the minimum vertex cover size}.
\red{Hence, their reduction can be treated as one from the combinatorial interdiction $3$-SAT to \vci, and thus} we conclude that \minvci is $\Sigma_2^p$-hard.
\end{proof}

\end{document}